\numberwithin{equation}{section}
\newcommand{\blue}[1]{\textcolor{blue}{\textbf{#1}}}
\newcommand{\mymnote}[1]{\marginpar{\raggedright\tiny\em
$\!\!\!\!\!\!\,\bullet$ \blue{#1}}}	  
\renewcommand{\mymnote}[1]{}	    
  \def\moverlay{\mathpalette\mov@rlay}
  \def\mov@rlay#1#2{\leavevmode\vtop{%
     \baselineskip\z@skip \lineskiplimit-\maxdimen
     \ialign{\hfil$#1##$\hfil\cr#2\crcr}}}
\newcommand{\squareTME}{\moverlay{\square\cr {\scriptscriptstyle \mathrm T}}}
\newcommand{\Co}{\mathbb{C}}		
\renewcommand{\Re}{{\mathfrak{Re}}}	
\renewcommand{\Im}{{\mathfrak{Im}}}	
\newcommand{\half}{\frac{1}{2}}		
\newcommand{\eps}{\epsilon}
\renewcommand{\i}{\mathrm{i}}
\newcommand{\inv}[1]{}
\newcommand{\fraka}{\mathfrak{a}}
\newcommand{\frakb}{\mathfrak{b}}
\newcommand{\frakc}{\mathfrak{c}}
\newcommand{\frakd}{\mathfrak{d}}
\newcommand{\NPl}{l}
\newcommand{\NPn}{n}
\newcommand{\NPm}{m}
\newcommand{\NPmbar}{\bar{m}}
\newcommand{\NPrho}{\rho}
\newcommand{\NPtau}{\tau}
\DeclareMathOperator*{\diag}{diag}
\DeclareMathOperator*{\tho}{\text{\th}}
\DeclareMathOperator*{\edt}{\text{\dh}}
\newcommand{\hnabla}{\widehat{\nabla}}
\DeclareMathOperator*{\SO}{SO}
\newcommand{\EE}{\mathcal E}
\theoremstyle{plain}
\newtheorem{thm}{Theorem}[section]
\newtheorem{cor}[thm]{Corollary}
\newtheorem{lemma}[thm]{Lemma}
\newtheorem{remark}[thm]{Remark}
\title{Linearized gravity and gauge conditions}
\author[S. Aksteiner]{Steffen Aksteiner}
\email{steffen@zarm.uni-bremen.de}
\address{QUEST, Leibniz University Hannover, Welfengarten 1, D-30167
  Hannover, Germany
\and
ZARM, University of Bremen, Am Fallturm 1, D-28359 Bremen, Germany}
\author[L. Andersson]{Lars Andersson}
\email{laan@aei.mpg.de}
\address{Albert Einstein Institute, Am M\"uhlenberg 1, D-14476 Potsdam,
  Germany}
\begin{document}

\date{September 28, 2010}

\begin{abstract}
In this paper we consider the field
equations for linearized gravity and other
integer spin fields on the Kerr spacetime, and more generally on
spacetimes
of Petrov type D. We give a derivation, using the GHP formalism, of decoupled
field equations for the linearized Weyl scalars for all spin weights and
identify the gauge source functions occuring in these.
For the spin weight 0 Weyl scalar, imposing a generalized
harmonic
coordinate gauge yields a generalization of the Regge-Wheeler equation.
Specializing to the Schwarzschild case,
we derive the gauge invariant Regge-Wheeler and Zerilli equation directly
from the equation for the spin 0 scalar.
\end{abstract}

\maketitle

\tableofcontents

\section{Introduction} \label{sec:introduction}
In his 1965 paper \cite{penrose:1965}
Penrose showed that, at least formally, all solutions of
the massless spin $s$ field
equation on Minkowski space can be
obtained from solutions of the spin $(s-\half)$
equations. By repeating this
process $2s$ times, one finds that solutions of to the spin $0$ equation
(i.e. the free scalar wave equation) are potentials for the massless
spin $s$ field. This is an example of the spin-raising and -lowering
transformations discussed in more detail in \cite[section 6.4]{PR:II}, see
also \cite[section 6.7]{PR:II}.

The following special case of the construction is relevant here.
Given a 2 index
Killing spinor, i.e. a spinor $K_{AB}$ solving the equation
$$
\nabla_{(A}{}^{A'} K_{BC)} = 0 ,
$$
and a solution $\phi_{ABC\dots D} = \phi_{(ABC\dots D)}$
of the spin $s$ zero rest-mass equation
$$
\nabla^{AA'} \phi_{ABC\dots D} = 0 ,
$$
the spin-lowered field $\hat \phi_{C\dots D} = \phi_{ABC\dots D} K^{AB}$
is a solution of the spin $(s-1)$ zero rest-mass equation.

The analysis of linear field equations on Minkowski space is a key step in
the proof of the non-linear stability of Minkowski space
\cite{ChristodoulouKlainerman:LinearFields, christo:klain:book}.
For the case of Minkowski space, the linearized Bianchi equation is precisely
the massless spin 2 equation. Thus, the discussion of Penrose shows that the
linearized stability problem for the
Einstein equation on Minkowski space can essentially
be reduced to a study of the scalar wave equation.

The problem of non-linear stability of Minkowski space, solved in
\cite{christo:klain:book}, see also \cite{lindblad:rodnianski:global},
can be viewed as a
warm-up for the \emph{black hole stability problem},
i.e. the problem of
proving the non-linear stability of the Kerr spacetime in the class of
asymptotically flat vacuum spacetimes, see e.g.
\cite{andersson:blue:kerr,Dafermos:2008en,finster:etal:bams:2009},
and references therein.
The black hole stability problem
adds several levels of
difficulty over the problem of stability of Minkowski space.
The aspect which we shall focus on in the present discussion is
that the background spacetime (or more properly stated, the asymptotic state for
the evolution)
fails to be conformally flat and consequently the relation between
the spin $0$ wave equation, viewed as a model problem for the full non-linear
stability problem,
and the equations of linearized gravity fails to be as
close as on the Minkowski background.

In particular, the equation for linearized gravity (i.e. the linearized
Bianchi system) on a non-conformally flat vacuum background is not the spin-2
system, but has a non-trivial right hand side. In fact,
the massless spin $2$ equation in a spacetime of Petrov type D has only trivial
solutions, cf. \cite{Buchdahl:1958xv}. It follows that
the spin-lowering and -raising transformations
cannot be applied directly to Maxwell or linearized gravity
on a background spacetime which is not conformally
flat.

However, if we consider the field equations of integer spin
on a vacuum type D spacetime, of which Kerr is a special case,
then it turns out that an analogue of the spin-lowering transformation
does yield useful results,
even for the equations of linearized gravity. This idea was
alluded to already in the paper of Jeffryes, see
\cite[p. 340]{jeffryes:1984}.
Although not stated as explicitly,
related ideas play an important
role in the work
of Fackerell and Crossman \cite{crossman:fackerell:1982,fackerell:1982}.

From the point of view of the black hole stability problem, the two-parameter
Kerr family of rotating black hole solutions are the vacuum type D spacetimes
of most immediate interest. However, most of
the results in this paper are valid for general vacuum type D spacetimes. These
include
the Kerr-Taub-NUT spacetimes, see \cite[section 21.1]{exact:solutions:book},
see also \cite{edgar:etal:2009}.

\subsection{Spin-lowering the Maxwell field}
Recall\footnote{Here and below we
use the conventions and notations of the Newman-Penrose (NP) and
Geroch-Held-Penrose (GHP) null-tetrad based formalisms, and the associated
two-spinor formalism, see \cite{newman:penrose:1962,GHP,PR:I,PR:II} and
section \ref{sec:preliminaries} below.} \cite{walker:penrose:1970} that any vacuum type D spacetime admits a Killing spinor of the form
\begin{equation}\label{eq:killing-spinor-intro}
K_{AB} = \Psi_2^{-1/3} o_{(A} \iota_{B)} .
\end{equation}
A Killing spinor which in addition satisfies the condition
$$
\nabla_{A'}{}^B K_{AB} + \nabla_A{}^{B'} \bar K_{A'B'} = 0
$$
corresponds via
$$
K_{ab} = i (K_{AB} \eps_{A'B'} - \bar K_{A'B'} \eps_{AB})
$$
to a Killing-Yano tensor, i.e. a skew 2-tensor	$K_{ab} =
K_{[ab]}$
satisfying the
Killing-Yano equation
$$
\nabla_{(a} K_{b) c} = 0 ,
$$
see
\cite{penrose:1973,carter:mclenaghan:1979,collinson:1976,jezierski:lukasik:2009}
for further information.

Let $\phi_{AB}$ be the Maxwell
spinor, i.e. a solution of the massless spin 1 equation. Then
$
\phi_{AB} K^{AB} = \Psi_2^{-1/3} \phi_1 ,
$
where $\phi_1$ is the spin weight $0$ Maxwell scalar and this rescaling of
$\phi_1$ solves a wave equation with potential
\begin{equation}\label{eq:fackerell-ipser-intro}
(\square + 2 \Psi_2) (\Psi_2^{-1/3} \phi_1 )  = 0,
\end{equation}
where $\square = \nabla^a \nabla_a$ is the d'Alembertian of the spacetime and
$\Psi_2$ is the spin weight $0$ Weyl scalar of the background.
This is precisely
the wave equation derived for the rescaled spin-weight $0$ Maxwell scalar on
Kerr
by Fackerell and Ipser \cite{fackerell:ipser}, who also argued that
$\phi_1$ can be used as a potential for the Maxwell field on the Kerr
spacetime.

The special case of this spin weight $0$ wave equation for the Maxwell field
on the Schwarzschild spacetime was used
recently in the proof by Blue \cite{blue:maxwell} of decay estimates for the
Maxwell equation. In this work, Blue was
inspired by Price \cite{price:1972:II} who showed that the Regge-Wheeler
\cite{regge:wheeler:1957} wave
equation for axial perturbations of the Schwarzschild spacetime can be viewed
as a wave equation, with potential, for a rescaled version of the
imaginary part of the spin weight 0 linearized Weyl scalar.

\subsection{Gauge invariant equations for linearized gravity}
We follow the convention of Price and others in discussing perturbations of
GHP quantities and let
a subindex $A$ denote quantities defined on the background and indicate first
order perturbed quantities with a subindex $B$. See
\cite{stewart:walker:1974,breuer:book} for detailed treatments of
perturbation theory in the context of tetrad based formalisms.
The Regge-Wheeler wave
equation in the form derived by Price can be written as
$$
( \square + 8 \Psi_{2A}	 ) (\Psi_{2A}^{-2/3} \Im \Psi_{2B} ) = 0 ,
$$
on the Schwarzschild spacetime, where in Schwarzschild coordinates $\Psi_{2A}
= -M r^{-3}$.

The approach of Price does not generalize directly to include
polar
perturbations.	These had previously been treated by Zerilli
\cite{zerilli:1970}, see also Moncrief \cite{moncrief:1974},
who derived a non-local
wave equation governing a gauge invariant potential for the polar degrees of
freedom,
and related work by Bi{\v{c}}{\'a}k \cite{bicak:1979} on
perturbations of the Reissner-Nordstr\"om solution.
Among the difficulties in generalizing Price's approach to the Regge-Wheeler
equation to cover general perturbations even of Schwarzschild is that in the
background the $\Psi_{2A}$ is non-zero and hence $\Psi_{2B}$ fails to be
gauge invariant.
In fact, it is only the linearized Weyl scalars
$\Psi_{0B}, \Psi_{4B}$
of extreme spin weights $2,-2$,
which are both coordinate and tetrad-gauge invariant, and
hence it is only those which can be directly viewed as physically measurable
quantities.

Teukolsky \cite{teukolsky:1972,teukolsky:1973:I} showed, by calculating in the
NP formalism on a Kerr background, working in the principal Kinnersley tetrad,
that suitably
rescaled versions of
the extreme spin weight
Maxwell ($s=1,-1$) and linearized Weyl scalars ($s=2,-2$)
on the Kerr spacetime satisfy decoupled
and separable
wave equations. The resulting system
is usually called the Teukolsky Master Equation (TME).

Shortly after the work of Teukolsky, Ryan \cite{ryan:1974}
showed that the vacuum Teukolsky system for linearized gravity
can be derived simply by projecting the Penrose
wave equation, i.e. the covariant tensor wave equation
$$
\square R_{abcd} = R_{abef}R_{cd}{}^{ef} +2(R_{aecf}R_b{}^e{}_d{}^f -R_{aedf}
R_b{}^e{}_c{}^f) ,
$$
satisfied by the
curvature tensor,
on a principal null tetrad, and linearizing.
This theme has been taken up and generalized
to arbitrary vacuum backgrounds by Bini et al. \cite{bini:etal:2002,bini:etal:2003}, where all equations for the $\Psi$'s and $\phi$'s are
calculated from components of a generalized de Rham operator acting on the
Riemann and Maxwell tensor.

Now consider a general vacuum type D background
spacetime.
Working in a principal tetrad, let $\psi_s$
be one of the fields $\Psi_{0B}, \Psi_{2A}^{-4/3} \Psi_{4B},
\phi_0, \Psi_{2A}^{-2/3} \phi_2$ (letting $s$ be the spin weight of the
field).
Further, let
$$
B_a = - (\NPrho \NPn_a - \NPtau \NPmbar_a) .
$$
Define a generalized wave operator acting on properly weighted quantities
by
$$
\squareTME_p = g^{ab} (\Theta_a + p B_a ) (\Theta_b + p B_b) ,
$$
where $\Theta_a$ is the weighted GHP covariant derivative, see section
\ref{sec:preliminaries} for details.
Then, cf. \cite{bini:etal:2002},
the TME$_s$ for spin s fields can be written in the form
\begin{equation}\label{eq:TME-intro}
( \squareTME_{2s} - 4s^2 \Psi_{2A} ) \psi_s = 0 .
\end{equation}

As shown by Teukolsky, for the Kerr case, equation \eqref{eq:TME-intro} can be
separated into
radial and angular equations.
Due to this fact, it has been possible to formally analyze
the Teukolsky system and its solutions and many
interesting discoveries have been made. Among these are the relation of the
separability of the system to the presence of the Carter constant and
corresponding symmetry operators, the calculation of the separation
constants,
as well as the proof of mode stability for
the Teukolsky system \cite{whiting:1989}.

However, it is not clear from these works that the Teukolsky system is well
suited for the analysis of the asymptotic decay properties of the higher spin
fields. Among the difficulties encountered in attempting to analyze the
Teukolsky system are the facts that it
has a long-range potential and
lower-order terms with slowly decaying coefficients.
See \cite{Hughes:2000pf} and citations therein for discussion.

\subsection{Spin lowering the linearized Weyl field}
Recall that
on a vacuum
type D spacetime,
the spin-lowered Weyl field $\psi_{ABCD} K^{CD}$, where $K_{AB}$ is a
Killing spinor, satisfies the Maxwell
equation, see \cite[\S 3.8]{jordan:ehlers:sachs:1961:II}, see also
\cite{penrose:1975}.
The same statement holds for the linearized Weyl
spinor\footnote{Here we use a $\delta$ to denote linearized quantities,
  eg. $(\delta \psi)_{ABCD}$ in order to avoid confusion with spinor indices.}
$(\delta \psi)_{ABCD}$
on Minkowski space,
and further in that case lowering the spin by 2 gives
$(\delta \psi)_{ABCD} K^{AB} K^{CD}$ which is a solution to the spin 0 wave
equation. We now consider the equations satisfied by these fields on a vacuum
type D background.

Let $(\delta \psi)_{ABCD}$
be the linearized Weyl spinor on a vacuum  type D
background,
and let $K_{AB}$ be the Killing spinor as in
\eqref{eq:killing-spinor-intro}. Then, expanding the spin 1 field
$ (\delta \psi)_{ABCD} K^{CD} $
which corresponds to a skew 2-tensor, into weighted scalars gives 
the rescaled linearized Weyl scalars $\hat \phi_{i-1} = \Psi_{2A}^{-1/3} \Psi_{iB}$,
$i=1,2,3$, of spin weights $1,0,-1$ and additional terms arising from linearized tetrad. 
Thus, by analogy with the above,
it is reasonable to suppose that the fields
$\hat\phi_0 = \Psi_{2A}^{-1/3}\Psi_{1B}$, $\Psi_{2A}^{-2/3} \hat\phi_2 =
\Psi_{2A}^{-1} \Psi_{3B}$ of spin weights $1, -1$, respectively,
satisfy an analogue of the Teukolsky system
TME$_s$, for $s=1, -1$, while the spin weight 0 field
$\hat\phi_1 = \Psi_{2A}^{-2/3} \Psi_{2B}$ can be expected to
satisfy an analogue of the Fackerell-Ipser equation.
Further,
since the non-extreme linearized Weyl scalars fail
to be gauge invariant, one also expects the corresponding equations to
contain gauge potential terms.

For the spin weight one case, a calculation, see
section \ref{sec:pert-calc},
shows
$$
(\squareTME_2 -4\Psi_2 \big)( \Psi_{2A}^{-1/3} \Psi_{1B}) =
-6\Psi_{2A}^{2/3}[ ({\tho}'+2\rho'-\bar{\rho}')\kappa_B
  -({\edt}'+2\tau'-\bar{\tau})\sigma_B + 2\Psi_{1B}],
$$
The equation satisfied by $\Psi_{2A}^{-1} \Psi_{3B}$ is similar.
The right hand side of the
equation for $\Psi_{1B}$
corresponds
to the right hand
sides of the equations for $\phi_{0B}$ on a charged type D
background with $\Psi_{2A}$ playing the role of
the spin weight 0 Maxwell scalar $\phi_{1A}$, see section
\ref{sec:pert-calc}.
The analogous statement holds for $\Psi_{3B}$ after applying a prime.
As we shall see, the conditions that the right hand sides of the equations for
$\Psi_{1B}, \Psi_{3B}$ are zero are tetrad gauge conditions. For the
case of the perturbed
Maxwell field this corresponds
to turning off the background charge, i.e. imposing
the condition $\phi_{1A} = 0$.
The fact that the just mentioned tetrad gauge conditions can be viewed as the
``ghost'' of the background charge motivated Chandrasekhar
\cite[p.240]{chandra:book} to use the
term \emph{phantom gauge}.
In fact, Chandrasekhar
showed that by imposing the phantom
gauge, the rescalings $\Psi_{2A}^{-1/3}
\Psi_{1B}, \Psi_{2A}^{-1} \Psi_{3B}$ satisfy the TME$_s$, for $s=1,-1$,
respectively.

We show in section \ref{sec:gauge} below that
the phantom gauge can be viewed as prescribing
a gauge source function for the tetrad degrees of freedom along the lines of
Friedrich \cite{friedrich:1985},
with the linearized Weyl field
itself as part of the gauge source. The phantom gauge was studied by
Chandrasekhar
from a formal point of view only, and the possible implications of
this procedure for the hyperbolicity and well-posedness of the linearized
Einstein equations were not analyzed in his work. We show here
that the phantom gauge condition is compatible with a
well posed system of equations for linearized gravity.

Finally, we consider the spin weight $0$ linearized Weyl scalar.
As the background $\Psi_{2A}$ is
non-vanishing, one has that $\Psi_{2B}$ is coordinate gauge dependent (but
tetrad-gauge independent).
Motivated by the previous discussion we consider the equation satisfied by the
spin weight 0 field $\Psi_{2A}^{-2/3} \Psi_{2B}$ obtained by lowering the
spin of the
the linearized Weyl field by 2.
A calculation, cf. section \ref{sec:pert-calc},
shows that this rescaled spin weight 0 linearized Weyl scalar solves
the equation
\begin{equation}\label{eq:spin-0-intro}
( \square +8\Psi_2 ) (\Psi_{2A}^{-2/3}\Psi_{2B} ) = - 3\square_B  \Psi_{2A}^{1/3} ,
\end{equation}
where the right hand side is the first order perturbation of the wave
operator, acting on the \emph{background} spin weight zero Weyl scalar.
In section \ref{sec:ginvschw}
we show that, restricting to the Schwarzschild case,
equation \eqref{eq:spin-0-intro}
contains all of the information in the Regge-Wheeler and Zerilli-Moncrief
systems by giving a direct derivation of these systems starting from
\eqref{eq:spin-0-intro}.

The condition that the right hand side of \eqref{eq:spin-0-intro}
vanishes can be viewed as a generalized harmonic
coordinate condition.
It is worth noting that this gauge condition can be
imposed also in the Schwarzschild case. Imposing this generalized harmonic gauge
condition, the spin weight zero linearized Weyl scalar
satisfies the scalar wave equation
\begin{equation}\label{eq:gen-RW-intro}
( \square +8\Psi_{2A} ) (\Psi_{2A}^{-2/3}\Psi_{2B} ) = 0 ,
\end{equation}
which can be viewed as a generalization of Price's version of the
Regge-Wheeler equation not
only to the full set of perturbations of Schwarzschild but also to
perturbations of Kerr.
Lun and Fackerell
\cite{lun:fackerell:1975}
considered the situation on Schwarzschild and argued formally
that by imposing a suitable gauge condition, one obtains equation
\eqref{eq:gen-RW-intro} (specialized to the Schwarzschild case).

We further point out that a generalized harmonic gauge condition with a gauge
source function involving $\Psi_{2B}$ can also be
used to modify the potential in \eqref{eq:gen-RW-intro} so that the equation
becomes the Fackerell-Ipser equation. However, as already discussed by
Crossman and Fackerell \cite{crossman:fackerell:1982,fackerell:1982}
this is possible only in the
\emph{rotating case}, and in particular for the Kerr family of spacetimes
involves a division by $a$. As in the case of \cite{lun:fackerell:1975}, the
discussion in  the papers \cite{crossman:fackerell:1982,fackerell:1982} is
quite formal and the gauge conditions
are there not expressed in terms of
gauge source functions. It is interesting to note that the just mentioned
work of Crossman
and Fackerell took as a starting point the Maxwell equation for the
spin-lowered Weyl field in a type D spacetime, and its linearization. This
has been carried further in the work of Ferrando et
al. \cite{fayos:ferrando:jaen:1990} where gauge
conditions yielding an \emph{exact} Maxwell system for the linearized,
spin-lowered Weyl field have been considered.

Now, following the approach taken by Blue for the case of Maxwell on
Schwarzschild, where the spin weight zero scalar was used as a potential for
the Maxwell field, it is an interesting possibility to use $\Psi_{2A}^{-2/3}
\Psi_{2B}$ as a potential for the full linearized gravity system. This would
allow one to reduce decay estimates for linearized gravity on Kerr to the
study of the scalar wave equation with potential \eqref{eq:gen-RW-intro}.

As is indicated by the discussion above, there is a
great deal of freedom in using gauge conditions to change the nature of the
(tetrad based) linearized Einstein equations. It is to be expected that this remark applies
equally to the full, non-linear system of Einstein equations. The
implications of this remain to be considered.
One could in principle go further, and make use of the gauge dependence of
$\Psi_{2B}$ to remove the potential from the equation and achieve a setup (in
the rotating case)
where $\Psi_{2A}^{-2/3} \Psi_{2B}$ solves the scalar wave equation
$
\square (\Psi_{2A}^{-2/3} \Psi_{2B}) = 0 .
$
We point out that the gauge conditions chosen by Chandrasekhar, cf.  \cite[\S
  82]{chandra:book}, in his
considerations of linearized gravity on the Kerr background included the
conditions $\Psi_{1B} = \Psi_{2B} = \Psi_{3B} = 0$. This type of gauge
condition will not be considered in detail here.

For linearized gravity on Kerr, among the questions which should be
considered are the choice of potential for linearized gravity and the field
equation governing this. For the gauge invariant scalars
$\Psi_{0B}, \Psi_{4B}$ these issues have been extensively discussed in the
literature. Making use of gauge conditions as discussed above opens up
interesting new possibilities. However to make full use of these, the problem
of reconstructing the full solution of linearized gravity from e.g.
$\Psi_{2B}$ must be considered.

\subsection{Overview of this paper} The plan of this paper is as follows. In
section \ref{sec:preliminaries} we set up notation and give a brief overview
of the GHP formalism and its specialization to vacuum
type D backgrounds. We also
discuss there gauge issues that arise when working in a tetrad based
formalism. Section \ref{sec:eq-lin-grav} starts by introducing the properly
weighted generalized wave operators which occur in the GHP formalism and in
the Teukolsky system. Further, we give there a derivation in the GHP
formalism of the equations for linearized gravity, cf. section
\ref{sec:pert-calc}.
The gauge nature of the
non-trivial right hand sides of the equations for the scalars of
non-extreme spin weights is discussed in section \ref{sec:gauge} where we also
give gauge-fixed versions of these systems which lead to new potentials for
linearized gravity satisfying well-posed field
equations.
In section \ref{sec:ginvschw} the gauge invariant Regge-Wheeler and Zerilli equations for Schwarzschild background are derived from \eqref{eq:spin-0-intro}.

\section{Preliminaries and notation} \label{sec:preliminaries}
We use the conventions and notations of \cite{GHP}. In particular we use
abstract index notation with lower case latin indices for tensors and upper
case latin indices for spinors. For tetrad indices we
use lower case fraktur font, while for
coordinate indices we use lower case greek letters.
Unless otherwise stated we shall consider only vacuum spacetimes $(M,
g_{ab})$ of dimension $4$, with signature $+---$.

\subsection{GHP formalism} \label{sec:GHP}
The Geroch-Held-Penrose (GHP) null-tetrad formalism \cite{GHP}
allows one to represent
the Einstein equations in a compact form, and gives an efficient tool for
calculations. Since we will make heavy use of this formalism and its
properties, we give, in order to make the paper reasonably self-contained,
a brief description of its main features.

Consider a null tetrad
$(e_\fraka^a) =
(\NPl^a, \NPn^a, \NPm^a, \NPmbar^a)$
consisting of two
real null vectors
$\NPl^a, \NPn^a$
and two complex linear combinations
of spatial vectors
$\NPm^a, \NPmbar^a$,
normalized such that
the only non-vanishing inner products of the tetrad vectors are
$$
\NPl^a \NPn_a = - \NPm^a \NPmbar_a = 1 .
$$
In order to avoid clutter we suppress the abstract index on the tetrad
vectors $e_\fraka$ when convenient.
The coframe $e^\fraka$ is defined by the relations $e^\fraka(e_\frakb) =
\delta^\fraka{}_\frakb$.
We note the useful relations
\begin{subequations}\label{eq:gab-tetr}
\begin{align}
g_{ab} &= \NPl_{a} \NPn_{b} + \NPn_a \NPl_b - \NPm_a \NPmbar_b - \NPmbar_a
\NPm_b ,
\\
\delta^a{}_b &= \NPl^a \NPn_b + \NPn^a \NPl_b - \NPm^a \NPmbar_b - \NPmbar^a
\NPm_b . \label{eq:kronecker}
\end{align}
\end{subequations}
A choice of a null tetrad picks out a 2-dimensional subgroup of
the Lorentz group in each tangent space (and hence also
a reduction of the principal
$\SO_+(3,1)$ bundle of $(M,g)$, see below)
which preserves the null planes spanned by
$\NPl^a, \NPn^a$
and the spatial planes spanned by
$\NPm^a, \NPmbar^a$.
These can be represented in terms of a non-vanishing complex field
$\lambda$ by the \emph{boost rotations}
\begin{subequations}\label{eq:tetrad-weights}
\begin{align}
\NPl^a &\to \lambda\bar\lambda \NPl^a, \quad \NPn^a \to
\lambda^{-1} \bar \lambda^{-1}	\NPn^a , \\
\intertext{and the \emph{spin rotations}}
\NPm^a &\to \lambda \bar\lambda^{-1} \NPm^a, \quad
\NPmbar^a \to \lambda^{-1} \bar \lambda \NPmbar^a .
\end{align}
\end{subequations}
Projecting tensor fields on the spacetime on the null tetrad gives a
representation of these fields in terms of collections of tetrad
components which are simply complex fields on the spacetime.
In general, a scalar
field $\eta$
defined by projecting a
tensor field will transform as
$$
\eta \to \lambda^p \bar\lambda^q \eta \,,
$$
for some integers $p,q$,
under the above defined action of $\lambda$. A quantity $\eta$
which transforms according to the above rule is said to have \emph{type}
$\{p,q\}$ and fields with well defined type are referred to as
\emph{weighted quantities}.
Note the notion of weighted quantity extends to tensors.
In particular, the
tetrad vectors
$\NPl^a, \NPn^a, \NPm^a, \NPmbar^a$
have types $\{1,1\}$,
$\{-1,-1\}$, $\{1, -1\}$, $\{-1, 1\}$, respectively.
It is useful to note that the type is additive under multiplication
and hence the weighted quantities form a
graded algebra.
The spin and boost weights $s,r$ of a weighted
quantity $\eta$ are
$s = \half(p-q)$, $r = \half(p+q)$.

The following formal operations take weighted quantities to weighted
quantities,
\begin{equation}\label{eq:ghpsym}
\begin{aligned}
^-(\text{bar})&: \; \NPl^a \to \NPl^a, \;   \NPn^a \to \NPn^a, \;  \NPm^a
\to \NPmbar^a,	\; \NPmbar^a \to \NPm^a, &\{p,q\}\to\{q,p\} ,\\
'(\text{prime})&: \; \NPl^a \to \NPn^a, \;   \NPn^a \to \NPl^a, \;
\NPm^a \to \NPmbar^a,  \; \NPmbar^a \to \NPm^a, &\{p,q\}\to\{-p,-q\} ,\\
^*(\text{star})&: \; \NPl^a \to \NPm^a , \;  \NPn^a \to -\NPmbar^a, \;
\NPm^a \to -\NPl^a, \; \NPmbar^a \to \NPn^a, &\{p,q\}\to\{p,-q\} .
\end{aligned}
\end{equation}
The bar \ $\bar{}$ \ operation acting on a weighted quantity is simply the
complex conjugation of the field.
We  have
$\bar{\bar{\eta}} = \eta$ and $\eta'' = (-1)^{p+q} \eta$ (note however that
we shall consider only fields with $p+q$ even). Further,
the bar $\bar{}$  and prime ${}^\prime$ operations commute,
while the star
${}^*$ operation commutes with neither of these. Thus, the star operation has
to be treated separately from the bar and prime operations.

For the case of an orthonormalized tetrad, the Levi-Civita connection can be
represented in terms of 24 independent real connection coefficients. In terms
of the null tetrad introduced above, the connection coefficients
$\Gamma_{\fraka \frakb}^\frakc = e^\frakc_b (
\nabla_{e_\fraka} e_\frakb^b )$ combine into 12 complex scalars,
called
\emph{spin coefficients}. Of these only 8
are properly weighted, and can be represented in terms of the quantities
\begin{align}
\kappa &=  \NPm^b \NPl^a \nabla_a \NPl_b , \quad
\sigma =  \NPm^b \NPm^a \nabla_a \NPl_b, \quad
\rho   =  \NPm^b \NPmbar^a \nabla_a \NPl_b , \quad
\tau   =
\NPm^b \NPn^a \nabla_a \NPl_b ,
\label{eq:spincoeff-def}
\end{align}
and their primes $\kappa', \sigma', \rho', \tau'$. The effect of the star
operation on the spin coefficients and their primes and complex conjugates
can be calculated directly from
\eqref{eq:spincoeff-def}, or see \cite[p. 878]{GHP} for a list.
The types of the spin
coefficients are
$$
\kappa: \{3,1\}, \quad \sigma: \{3,-1\}, \quad \rho: \{1,1\}, \quad \tau: \{1,-1\} ,
$$
and the types of their primes are given according to \eqref{eq:ghpsym}.

The remaining 4 spin coefficients
\begin{align*}
\beta &= \half ( \NPn^b \NPm^a \nabla_a \NPl_b - \NPmbar^b \NPm^a \nabla_a
\NPm_b ) ,\\
\eps &=	 \half ( \NPn^b \NPl^a \nabla_a \NPl_b - \NPmbar^b \NPl^a \nabla_a
\NPm_b ) ,
\end{align*}
and their primes $\beta', \eps'$,
are not
properly weighted and are not used explicitly in the GHP formalism.

Let $\Co_*$ denote the non-zero complex numbers.
As discussed by Ehlers \cite{ehlers:1974}, see also
\cite{guven:1976, harnett:1990}, the choice of two null directions gives a
reduction of the principal $\SO_+(3,1)$ frame bundle of the spacetime to a
principal $\Co_*$ bundle $B$ with the action of $\Co_*$ $z : (\lambda,
e_\fraka)\mapsto e_\fraka . z(\lambda)$ given by
\eqref{eq:tetrad-weights}. The weighted quantities may be viewed as sections
of associated complex line bundles\footnote{By the same construction we can
also treat e.g. differential forms as weighted quantities.}
$\EE^{\{p,q\}} = B \times_{z^{\{p,q\}}} \Co$
determined by the representation $z^{\{p,q\}}:
(\lambda, v) \to  \lambda^p \bar{\lambda}^q v
$ of $\Co_*$ on $\Co$. The restriction of the Levi-Civita connection to the
reduced bundle $B$ induces
a connection on the
bundles $\EE^{\{p,q\}}$ given by
\begin{equation}\label{eq:Theta-form}
\Theta_a \eta = \nabla_a \eta - p \omega_a \eta - q \bar{\omega}_a \eta ,
\end{equation}
where
$\omega_a$ is the connection form
\begin{align*}
\omega_a &= - \eps' \NPl_a + \eps \NPn_a + \beta' \NPm_a - \beta \NPmbar_a
  \\
&= \half ( \NPn^b \nabla_a \NPl_b + \NPm^b \nabla_a \NPmbar_b ) .
\end{align*}
Under a gauge transformation  $\omega_a$ transforms as
$$
\omega_a . z = \omega_a + \frac{\nabla_a\lambda}{\lambda} ,
$$
Note that $\omega_a' = -\omega_a$. It follows that
$$
(\Theta_a \eta)' = \Theta_a  \eta' ,
$$
for any properly weighted quantity $\eta$.

The GHP operators $\tho, \tho', \edt, {\edt}'$ are defined as
the $\Theta_a$ covariant derivative along the tetrad vectors,
$$
\tho = \NPl^a \Theta_a, \quad {\tho}'= \NPn^a \Theta_a, \quad
\edt = \NPm^a \Theta_a, \quad {\edt}'=\NPmbar^a \Theta_a .
$$
The action of the bar, prime and star operations on the GHP operators follows
from their action on the tetrad vectors.
Expanding $\Theta_a$ in terms of the GHP operators gives
\begin{equation}\label{eq:Theta-GHPop}
\Theta_a =
\NPl_a {\tho}' + \NPn_a \tho -\NPm_a {\edt}' -\NPmbar_a \edt.
\end{equation}
In terms of the graded algebra of weighted quantities, the covariant
derivative $\Theta_a$ as well as the GHP operators satisfy a (graded) Leibniz
rule. Further,
as remarked above, the notions of properly weighted quantity extend to
differential forms and more general objects, and hence also the weighted
covariant derivative lifts to act on such objects. In particular, the tetrad
elements themselves are properly weighted quantities,
and the action of $\Theta_a$ and
the GHP operators on these can be read off from the definitions. We have
for example $\tho \NPl_c = \delta_c{}^b \NPl^a \Theta_a \NPl_b$. Expanding
this out using \eqref{eq:kronecker}, \eqref{eq:Theta-form} and the
definitions of the spin-coefficients gives $\tho \NPl_c = - \bar{\kappa}
\NPm_c - \kappa \NPmbar_c$. The following equations
\begin{subequations}\label{eq:dtetrad}
\begin{align}
\tho l_a &= -\bar{\kappa} m_a -\kappa \bar{m}_a,
& \tho m_a = -\bar{\tau}'l_a -\kappa n_a ,\\
{\tho}' l_a &= -\bar{\tau} m_a -\tau \bar{m}_a, & {\tho}' m_a = -\bar{\kappa}'l_a -\tau n_a ,\\
\edt l_a &= -\bar{\rho} m_a -\sigma \bar{m}_a,
& \edt m_a = -\bar{\sigma}'l_a -\sigma n_a ,\\
{\edt}' l_a &= -\bar{\sigma} m_a -\rho \bar{m}_a,
& {\edt}' m_a = -\bar{\rho}'l_a -\rho n_a .
\end{align}
\end{subequations}
and their primes
and complex conjugates
give the complete set of relations.

The Leibnitz rule for covariant derivative together with
\eqref{eq:dtetrad} allows one to read off the action of the GHP operators on
tensors projected on any combination of tetrad vectors, and hence covariant
tensor equations can be expressed equivalently
as collections of scalar equations
in the GHP formalism.

The 10 degrees of freedom of the Weyl tensor can be represented by the five
weighted Weyl scalars
\begin{align*}
\Psi_0 &= W_{abcd} \NPm^a \NPl^b \NPl^c \NPm^d ,
\quad
\Psi_1 = W_{abcd} \NPn^a \NPl^b \NPl^c \NPm^d,
\quad
\Psi_2 = W_{abcd} \NPm^a \NPl^b \NPmbar^c \NPn^d,
\\
\Psi_3 &= W_{abcd} \NPn^a \NPl^b \NPmbar^c \NPn^d,
\quad
\Psi_4 = W_{abcd} \NPmbar^a \NPn^b \NPn^c \NPmbar^d
\,.
\end{align*}
Similarly, the Maxwell field strength can be represented by the 3 Maxwell
scalars
$$
\phi_0= F_{ab} \NPl^a \NPm^b ,
\quad \phi_1= \half (F_{ab} \NPl^a \NPn^b + F_{ab} \NPmbar^a \NPm^b ) ,
\quad \phi_2= F_{ab} \NPmbar^a \NPn^b \, .
$$
We shall refer to the spin coefficients and the Weyl and Maxwell scalars
collectively as GHP quantities.

The Weyl scalars $\Psi_i$, $i=0,\dots,4$ have types $\{4-2i,0\}$ while
the Maxwell scalars $\phi_i$, $i=0,1,2$ have types $\{2-2i,0\}$. The prime
operation gives $\Psi'_i =  \Psi_{4-i}$, $i = 0,\dots,4$, and $\phi'_i = -
\phi_{2-i}$, $i=0,1,2$. Further, $\Psi_i^*=\Psi_i$ and $\phi_i^*=\phi_i$.

We now state, modulo prime and star operations,
the Einstein, Bianchi and Maxwell equations in GHP notation, specialized to
the vacuum case. Working in a tetrad formalism, the Einstein equation takes
the form of a system of first-order equations for the connection
coefficients, i.e. in the GHP setting, for the spin coefficients. These are
given by
\begin{subequations}\begin{align}
\edt \rho -{\edt}' \sigma &= (\rho -\bar{\rho})\tau +(\bar{\rho}'-\rho')\kappa -\Psi_1 ,\label{eq:ghpeinsteina}\\
\tho \rho -{\edt}' \kappa &= \rho^2 + \sigma \bar{\sigma} -\bar{\kappa}\tau
-\kappa \tau' , \label{eq:ghpeinsteinb}\\
\tho \sigma -\edt \kappa &= (\rho +\bar{\rho})\sigma -(\tau
+\bar{\tau}')\kappa +\Psi_0 , \label{eq:ghpeinsteinc}\\
\tho \rho' -\edt \tau' &= \rho' \bar{\rho} +\sigma \sigma' -\tau'\bar{\tau}'
-\kappa \kappa' -\Psi_2 , \label{eq:ghpeinsteind}
\end{align}\end{subequations}
together with their primed and starred versions.
The Bianchi equations are given by
\begin{subequations}\label{eq:ghpbianchi}
\begin{align}
(\tho -4 \rho) \Psi_1 -({\edt}'-\tau')\Psi_0 &= -3\kappa \Psi_2 , \label{eq:ghpbianchia}\\
(\tho -3 \rho) \Psi_2 -({\edt}'-2\tau')\Psi_1 &= \sigma' \Psi_0 -2 \kappa
  \Psi_3 ,  \label{eq:ghpbianchib}
\end{align}\end{subequations}
together with their primed and starred versions,
and the Maxwell equations are
\begin{align}
(\tho -2\rho)\phi_1 -({\edt}'-\tau')\phi_0 = -\kappa \phi_2 , \label{eq:ghpmaxwell}
\end{align}
with its primed and starred versions.
Further, the GHP operators acting on weighted quantities
satisfy the commutator relations
\begin{subequations}
\begin{align}
\big[\tho,{\tho}'\big]\eta &= \big[ (\bar{\tau} -\tau')\edt +(\tau -\bar{\tau}'){\edt}' -p(\kappa \kappa' -\tau \tau' +\Psi_2) -q(\bar{\kappa} \bar{\kappa}'-\bar{\tau}\bar{\tau}' +\bar{\Psi}_2) \big] \eta , \label{eq:ghpcomma} \\
\big[\tho,\edt\big] \eta &= \big[-\bar{\tau}'\tho -\kappa {\tho}' +\bar{\rho}\edt +\sigma {\edt}' -p(\rho' \kappa -\tau' \sigma +\Psi_1) -q(\bar{\sigma}'\bar{\kappa} -\bar{\rho}\bar{\tau}') \big] \eta , \label{eq:ghpcommb}
\end{align}
\end{subequations}
together with their primed and starred versions.

\subsection{Petrov type D spacetimes}
\label{sec:typeD}
In a spacetime of type D we can fix a null tetrad up to rescalings (and a
trivial rearrangement) by aligning the real null vectors with the principal
null directions. Such a tetrad is called a principal tetrad.
In a vacuum type D spacetime, working in a principal tetrad,
$\Psi_0=\Psi_1=\Psi_3=\Psi_4=0$ follows and due to the Goldberg-Sachs theorem $ \kappa=\kappa'=\sigma=\sigma'=0$. The only non-vanishing GHP quantities are
$$
\Psi_2, \rho, \tau, \rho', \tau',
$$
and
the Bianchi identities \eqref{eq:ghpbianchi} simplify to
\begin{equation}\label{eq:ghpDbianchi}
{\tho}\Psi_2=3\rho\Psi_2 , \quad  {\edt}\Psi_2=3\tau\Psi_2 ,
\end{equation}
together with their primed versions.
See \cite{edgar:etal:2009} for further identities valid for the GHP quantities
valid in vacuum type D spacetimes.
We record for later use a
commutation relation for $\{p,0\}$ quantities $\eta$ on vacuum type D
backgrounds. The following identity (and its prime) is a consequence of
\eqref{eq:ghpcommb},\eqref{eq:ghpeinsteina}
and
\eqref{eq:ghpeinsteina}$^*$,
\begin{equation} \label{advcomm}
\big[\tho -a\rho , \edt-a\tau \big] \eta
= \bar{\rho}(\edt-a\tau)\eta -\bar{\tau}' (\tho-a\rho)\eta,
\end{equation}
see  \cite[equation (2.5)]{fernandes:lun:1997} for a more general relation
involving $q$-weight.

\subsection{Perturbation theory and gauge transformations}
\label{sec:pert-gauge}
Here we give a short overview of gauge transformations in perturbation
theory.	 See
\cite{stewart:walker:1974, breuer:book} for more details.
Perturbations of a spacetime can be understood in terms of curves
in the space of
solutions of Einstein field equations, originating at a given background
spacetime. Linear perturbations are
tangents to such curves at the origin. We denote the perturbation
parameter by $\epsilon$.

The identification of points of background and perturbed spacetime is called identification gauge. Introducing coordinates $x^a$ in the background, an infinitesimal transformation of the form $x^a \to x^a +\epsilon \xi^a$ can be interpreted as changing the
identification of points between background and perturbed
spacetime. Quantities which do not change under these transformations are
called identification or coordinate gauge invariant. A quantity is
\textit{coordinate gauge invariant} if and only if
it vanishes or is a constant scalar
or a constant linear combination of products of Kronecker deltas in the
background, see \cite[p.24]{stewart:book}.

The Weyl scalars $\Psi_{iB}$ transform as scalars under coordinate
transformations,
$$
\Psi_{iB} \to \Psi_{iB} - \eps\, \xi^\mu \partial_\mu\Psi_{iA} + O(\eps^2).
$$
In a type D background, only $\Psi_2$ is non-zero, and hence
the spin 0 scalar is the only one of the $\Psi_{iB}$, $i=0,\dots,4$ which
fails to be coordinate gauge invariant.

As mentioned in section \ref{sec:GHP},
we choose the background tetrad to be fixed up to a two-dimensional subgroup of
the Lorentz group corresponding to boost rotations
of
the future pointing null vectors $\NPl^a, \NPn^a$,
and spin rotations of $\NPm^a, \NPmbar^a$.
However, the
perturbed tetrad has the full
transformation freedom under infinitesimal elements of the Lorentz group
$$
\begin{array}{lll}
\NPl^a_B \to \NPl^a_B, & \NPn^a_B \to \NPn^a_B	+\eps (\bar{a}\NPm^a +a\NPmbar^a),
& \NPm^a_B \to \NPm^a_B +\epsilon a \NPl^a ,\\
\NPl^a_B \to \NPl^a_B +\eps(\bar{b}\NPm^a + b\NPmbar^a), & \NPn^a_B \to \NPn^a_B,
& \NPm^a_B \to \NPm^a_B +\eps b \NPn^a , \\
\NPl^a_B \to \NPl^a_B+\eps A \NPl^a, & \NPn^a_B \to \NPn^a_B -\eps A\NPn^a, &
\NPm^a_B \to \NPm^a_B-\i\eps \Theta\NPm^a , \\
\end{array}
$$
where $a,b$ are complex and $A,\Theta$ are real functions (see e.g. the linearized versions of \eqref{eq:tetrad-weights} for the third line.
A quantity which is invariant under these transformations will be called \textit{tetrad gauge invariant}.
For the first subset of infinitesimal Lorentz transformations we have for example
\begin{align}
\Psi_{jB} \to \Psi_{jB} +\epsilon j \bar{a} \Psi_{j-1\, A} , \quad j=0,...4; \Psi_{-1 \, A} =0 .
\end{align}
A complete table for all GHP quantities
can be found in \cite[\S
  5.10]{breuer:book}. It can be verified that $\Psi_{0B}, \Psi_{2B}$ and
$\Psi_{4B}$ are tetrad gauge invariant.

In the following, in order to avoid clutter in the notation, we will drop the index $A$ for background quantity, unless
it is not clear from the context whether a certain quantity is evaluated on
the background.

\section{Equations for linearized gravity and electromagnetism}
\label{sec:eq-lin-grav}
In this section, we derive equations for linear perturbations of the Weyl
components $\Psi_0, ..., \Psi_4$ on vacuum type D backgrounds, as well as for
the linearized Maxwell scalars $\phi_0, \phi_1, \phi_2$ on charged type D
backgrounds.
The gauge invariant fields
$\Psi_{0B}, \Psi_{4B}$ satisfy the Teukolsky system TME$_{s}$, for $s=2,-2$,
while
the equations for the tetrad gauge dependent scalars
$\Psi_{1B}, \Psi_{3B}$ correspond to the TME$_s$ for
$s=1,-1$, but with a non-trivial right hand side involving a gauge source
function, cf. section \ref{sec:gauge}.
For the spin weight zero linearized Weyl scalar $\Psi_{2B}$ we find a
new wave equation with a non-trivial right hand side. If the right hand side
vanishes, this equation is a direct generalization of the Regge-Wheeler wave
equation to type D. In section \ref{sec:gauge} below we consider the
structure of these right hand sides in more detail.

We point out that while the equations for the linearized
Weyl scalars as well as the
linearized Maxwell scalars decouple in the sense that each individual
equation involves only one of these scalars, the equations for the
non-extreme spins are coupled via linearized spin coefficients, unless
further gauge conditions are imposed.

\subsection{Weighted wave operators} \label{sec:weight-wave}
As is the case for any vector
bundle over $(M,g)$ with covariant derivative, there is a natural generalized
wave operator acting on sections of the bundles
$\EE^{\{p,q\}}$.
The Weyl and Maxwell scalars are properly weighted quantities of type $\{2s,
0\}$ for integer spin weights $s$. Since we shall be interested in
operators acting on the Weyl and Maxwell scalars, we restrict our attention to
the operator $\square_p = \Theta^a \Theta_a$ acting on quantities of type
$\{p,0\}$. Expanding this using \eqref{eq:Theta-GHPop} and \eqref{eq:dtetrad}
gives after some calculations using the commutation relations
\eqref{eq:ghpcomma},\eqref{eq:ghpcommb}
\begin{align}
\square_{p}
&= 2  \bigg{[} (\tho-\bar{\rho})({\tho}'-\rho')
  -(\edt-\bar{\tau}')({\edt}'-\tau') +\sigma\sigma' -\kappa'\kappa -\Psi_2
  \\
&\quad	+\frac{p}{2}(\kappa\kappa'-\tau\tau' +\rho\rho'-\sigma\sigma'+2\Psi_2)
   \bigg{]} , \label{box4}
\end{align}
Let $\omega_a$ be the connection form in $\Theta_a$, cf. section
\ref{sec:GHP}, and let $B_a$ be a properly weighted form of type
$\{0,0\}$. Then
$\omega_a - B_a$ is again a connection form on the weigted
bundles $\EE^{\{p,q\}}$ and $\Theta_a + pB_a + q \bar{B}_a$ is again a
weighted covariant derivative on the bundles $\EE^{\{p,q\}}$.
In particular, let
\begin{align}
B_a = - (\rho \NPn_a -\tau \NPmbar_a ) .\label{tmeconnection}
\end{align}
Modifying the covariant derivative with $B_a$ gives the weighted wave
operator on type $\{p,0\}$ quantities
$$
\squareTME_p = (\Theta^a + p B^a) (\Theta_a + p B_a) .
$$
Note that $\squareTME_0 = \square$.
We can now write the vacuum
Teukolsky master equation for a spin weight $s$ field
$\psi^{(s)}$
in the form
$$
[ \squareTME_{2s} - 4 s^2 \Psi_2 ] \psi^{(s)} = 0 ,
$$
cf. Bini et al. \cite[\S 4]{bini:etal:2003}.

A calculation shows that acting on a quantity of type $\{p,0\}$ we have
\begin{align}
\squareTME_{p}
&=\square_p + 2pB^a \Theta_a + p (\Theta^a B_a) + p^2 B^a B_a \nonumber
\\
&= 2(\tho -p\rho -\bar{\rho})({\tho}'-\rho')- 2(\edt-p\tau
-\bar{\tau}')({\edt}'-\tau') \nonumber \\
&\quad+ (p-2) [ \kappa \kappa' - \sigma \sigma'] +
(3p-2)\Psi_2 . \label{eq:squaretme-full}
\end{align}
Restricting to a type D background we have
\begin{multline}\label{eq:squaretme-restrict}
\squareTME_{p} =
2(\tho -p\rho -\bar{\rho})({\tho}'-\rho')- 2(\edt-p\tau
-\bar{\tau}')({\edt}'-\tau')  + (3p-2)\Psi_2 .
\end{multline}
since $\kappa, \sigma$ vanish there.

Recalling the discussion in section \ref{sec:GHP}, $\Theta_a$ transforms
properly under the prime operation. In particular, we have
$(\Theta_a \eta)' = \Theta_a \eta'$ and hence also
$(\Theta^a \Theta_a  \eta )' = \Theta^a \Theta_a \eta' $
for any properly weighted quantity. However, modified connection
$\Theta_a + p B_a$ does not satisfy this rule since
$B_a' \ne - B_a $.
Instead, the operator $\squareTME_p$ has the following transformation rule
involving rescalings.
\begin{lemma} \label{lem:tmeprime}
Let $\eta$ be a properly weigthed quantity of type $\{p,0\}$.
The generalized wave operator $\squareTME_{p}$ on a vacuum
type D background transforms under prime as
\begin{equation}\label{eq:tmeprime}
(\squareTME_{p} \eta)'=
\Psi_2^{-p'/3} \squareTME_{p'} (\Psi_2^{p'/3} \eta').
\end{equation}
\end{lemma}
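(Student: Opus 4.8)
The plan is to avoid expanding $\squareTME_p$ into GHP operators and instead work with its invariant form $\squareTME_p = g^{ab}(\Theta_a + pB_a)(\Theta_b + pB_b)$, with $B_a$ as in \eqref{tmeconnection}, exploiting that the prime operation is natural for the pieces out of which this operator is built. The metric $g^{ab}$ is prime-invariant, since the tetrad expression \eqref{eq:gab-tetr} is symmetric under $\NPl\leftrightarrow\NPn$, $\NPm\leftrightarrow\NPmbar$; the weighted derivative satisfies $(\Theta_a X)' = \Theta_a X'$ on weighted scalars and, by the Leibniz rule, on weighted tensors; and the one-form $B_a$ primes to a definite one-form $B_a'$. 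Since $\eta$ has type $\{p,0\}$, the quantity $(\Theta_b + pB_b)\eta$ still has type $\{p,0\}$, so the same coefficient $p$ multiplies $B$ in both factors. Priming therefore replaces $B$ by $B'$ without changing the coefficient $p$, giving $(\squareTME_p\eta)' = g^{ab}(\Theta_a + pB_a')(\Theta_b + pB_b')\eta'$. Crucially $B_a'\neq -B_a$, and this discrepancy is precisely what will produce the $\Psi_2$ rescaling.

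Next I would pin down $B_a'$ using the type D Bianchi identities. From \eqref{tmeconnection}, $B_a = -(\rho\NPn_a - \tau\NPmbar_a)$, and applying \eqref{eq:ghpsym} gives $B_a' = -(\rho'\NPl_a - \tau'\NPm_a)$. On the other hand, expanding $\Theta_a$ via \eqref{eq:Theta-GHPop} and using \eqref{eq:ghpDbianchi} together with its prime, namely $\tho\Psi_2 = 3\rho\Psi_2$, $\tho'\Psi_2 = 3\rho'\Psi_2$, $\edt\Psi_2 = 3\tau\Psi_2$, $\edt'\Psi_2 = 3\tau'\Psi_2$, yields $\Theta_a\Psi_2 = 3\Psi_2(\rho'\NPl_a + \rho\NPn_a - \tau'\NPm_a - \tau\NPmbar_a)$, so that $\Theta_a\log\Psi_2^{1/3} = \rho'\NPl_a + \rho\NPn_a - \tau'\NPm_a - \tau\NPmbar_a$. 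Comparing these three expressions gives the key identity $B_a' = -B_a - \Theta_a\log\Psi_2^{1/3}$.

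Finally I would convert the rescaling into a shift of the modified connection. For any weighted $\psi$, the Leibniz rule together with the formula for $\Theta_a\log\Psi_2^{1/3}$ gives the conjugation identity $\Psi_2^{p/3}(\Theta_a - pB_a)\Psi_2^{-p/3} = \Theta_a - p\bigl(B_a + \Theta_a\log\Psi_2^{1/3}\bigr) = \Theta_a + pB_a'$, where the last step uses the identity just established. Applying this to both factors of $\squareTME_{-p} = g^{ab}(\Theta_a - pB_a)(\Theta_b - pB_b)$ — first to the inner derivative acting on $\eta'$, then to the outer derivative acting on the resulting one-form — produces $\Psi_2^{p/3}\squareTME_{-p}(\Psi_2^{-p/3}\eta') = g^{ab}(\Theta_a + pB_a')(\Theta_b + pB_b')\eta'$, which is exactly the expression obtained for $(\squareTME_p\eta)'$ in the first step. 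Since priming sends type $\{p,0\}$ to $\{-p,0\}$, i.e. $p' = -p$, so $\Psi_2^{-p'/3} = \Psi_2^{p/3}$ and $\Psi_2^{p'/3} = \Psi_2^{-p/3}$, this is precisely \eqref{eq:tmeprime}.

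The main obstacle is the first step: justifying that the prime of the second-order operator is obtained simply by replacing $B_a$ with $B_a'$ while keeping the coefficient $p$ fixed. One must be careful that the outer derivative acts on a weighted one-form whose type is still $\{p,0\}$, so that the weight entering the connection modification remains $p$ and not $-p$; it is the failure of $B_a$ to be prime-antisymmetric, quantified by the Bianchi computation of the second step, that forces the compensating factors $\Psi_2^{\mp p/3}$. A purely computational alternative would expand both sides via \eqref{eq:squaretme-restrict} and reconcile the two operator orderings using the commutator \eqref{advcomm}, but the invariant route above sidesteps that reordering entirely.
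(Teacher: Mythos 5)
Your proof is correct, but it takes a genuinely different route from the one in the paper. The paper proves the lemma by brute force on the GHP-operator form \eqref{eq:squaretme-restrict}: it establishes the second-order operator identity
\begin{equation*}
(\tho'+p\rho'-\bar{\rho}')(\tho-\rho)-(\edt'+p\tau'-\bar{\tau})(\edt-\tau)
=(\tho-\bar{\rho})(\tho'+(p-1)\rho')-(\edt-\bar{\tau}')(\edt'+(p-1)\tau')+3p\Psi_2
\end{equation*}
using the commutation relations and the type D field equations, and then absorbs the rescaling via the Bianchi identities --- precisely the ``reconcile the two operator orderings'' route you mention and deliberately avoid. Your argument instead stays with the invariant form $g^{ab}(\Theta_a+pB_a)(\Theta_b+pB_b)$ and reduces the whole lemma to the first-order identity $B_a'=-B_a-\Theta_a\log\Psi_2^{1/3}$, which follows from the Bianchi identities \eqref{eq:ghpDbianchi} alone, plus a conjugation of the modified connection by $\Psi_2^{\pm p/3}$. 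This is cleaner and more conceptual: it isolates the failure of $B_a$ to be prime-antisymmetric as the sole source of the rescaling, and it never requires reordering second-order operators. The one step you rightly flag --- that priming the composite operator keeps the coefficient $p$ on both factors while replacing $B_a$ by $B_a'$ --- is handled correctly: prime acts as an algebra homomorphism commuting with $\Theta_a$ on weighted tensors (a fact the paper states for scalars and implicitly uses for tensors via \eqref{eq:dtetrad}), and the inner factor $(\Theta_b+pB_b)\eta$ is still a type $\{p,0\}$ object, so no sign flip of $p$ occurs before the conjugation step. The trade-off is that the paper's computation, once done, also yields the explicit intermediate identity it reuses elsewhere, whereas your argument is shorter but leaves that identity implicit.
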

\begin{remark}
It should be noted that the $p$ in $\squareTME_{p}$ denotes the weight of
the quantity on which it acts. Hence the $p'$ in the right hand side of
\eqref{eq:tmeprime} is $p' = -p$, since the type of $\eta'$ is $\{-p,0\}$.
\end{remark}
\begin{proof}
Using the commutation relations and field equations on type D, one gets the identity
\begin{equation}\begin{aligned}
&({\tho}'+p\rho'-\bar{\rho}')(\tho-\rho) -({\edt}'+p\tau'-\bar{\tau})(\edt-\tau) \\
& \hspace{2cm}= (\tho-\bar{\rho})({\tho}'+(p-1)\rho')-(\edt-\bar{\tau}')({\edt}'+(p-1)\tau') + 3p\Psi_2 .
\end{aligned}\end{equation}
Rescaling the RHS by $\Psi_2^{p/3}$, using Bianchi identities~\eqref{eq:ghpDbianchi} gives the result.
\end{proof}
As we shall see, using this transformation property for $\squareTME_p$, half
of the equations for linearized gravity discussed below follow without
calculation.

\subsection{Perturbation calculations} \label{sec:pert-calc}
We now derive the equations for the linearized Weyl scalars in terms of the
weighted wave operators.
\begin{thm} \label{thm:eq-lin-grav}
On a vacuum type D background we have
\begin{align}
\big[\squareTME_4 -16\Psi_2 \big] \Psi_{0B} &= 0 \label{psi0b},\\
\big[\squareTME_2 -4\Psi_2 \big] \big(\Psi_2^{-1/3} \Psi_{1B} \big) &=
-6\Psi_{2A}^{2/3}
[ ({\tho}'+2\rho'-\bar{\rho}')\kappa_B -({\edt}'+2\tau'-\bar{\tau})\sigma_B + 2\Psi_{1B}]
, \label{eq:psi1b-final} \\
 \big[ \squareTME_0 +8\Psi_2 \big] \big(\Psi_2^{-2/3} \Psi_{2B} \big) &=
- 3\square_B  \Psi_2^{1/3}
 \label{eq:psi2b-final} ,
\\
\big[\squareTME_{-2}-4\Psi_2\big] \big( \Psi_2^{-1} \Psi_{3B} \big) &=
-6
[ ({\tho}+2\rho-\bar{\rho})\kappa_B' -({\edt}+2\tau-\bar{\tau}')\sigma_B' +
  2\Psi_{3B}] \label{eq:psi3b-final}
\\
\big[ \squareTME_{-4} -16 \Psi_2 \big] \big( \Psi_2^{-4/3} \Psi_{4B} \big) &=0	, \label{psi4b}
\end{align}
\end{thm}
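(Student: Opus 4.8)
The plan is to prove \eqref{psi0b}, \eqref{eq:psi1b-final} and \eqref{eq:psi2b-final} directly and to obtain \eqref{psi4b} and \eqref{eq:psi3b-final} for free from Lemma~\ref{lem:tmeprime}: priming \eqref{psi0b} and clearing the overall power of $\Psi_2$ produces \eqref{psi4b} (using $\Psi_0'=\Psi_4$, $\Psi_2'=\Psi_2$), and priming \eqref{eq:psi1b-final} produces \eqref{eq:psi3b-final} (with $\Psi_1'=\Psi_3$, and each operator and spin coefficient in the source transforming under prime as in \eqref{eq:ghpsym}); equation \eqref{eq:psi2b-final} is prime-invariant since $\squareTME_0=\square$, so no separate argument is needed there.

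For the three direct identities I would use one mechanism throughout. Linearizing the Bianchi system \eqref{eq:ghpbianchi} and its primed and starred versions on the type D background (where $\kappa=\sigma=\kappa'=\sigma'=0$ and $\Psi_0=\Psi_1=\Psi_3=\Psi_4=0$) couples each $\Psi_{iB}$ to its neighbours through first-order operators. Writing $\squareTME_p$ as in \eqref{eq:squaretme-restrict}, one substitutes these Bianchi expressions for the two primed derivatives of the scalar into the two second-order blocks; the cross terms in the neighbouring scalar then cancel identically by the adjusted commutator \eqref{advcomm}, applied with $a=4,3,2$ for the spin $2,1,0$ scalars respectively. The overall rescaling by a power of $\Psi_2$ is carried through the operators using \eqref{eq:ghpDbianchi}, e.g. $(\tho'-\rho')(\Psi_2^{-1/3}\Psi_{1B})=\Psi_2^{-1/3}(\tho'-2\rho')\Psi_{1B}$, which shifts the coefficients of $\rho,\tau$ so that \eqref{advcomm} applies with the correct $a$.

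For the extreme case \eqref{psi0b} I would take \eqref{eq:ghpbianchia} and its star, linearized to
\begin{align*}
(\edt'-\tau')\Psi_{0B}&=(\tho-4\rho)\Psi_{1B}+3\kappa_B\Psi_2, & (\tho'-\rho')\Psi_{0B}&=(\edt-4\tau)\Psi_{1B}+3\sigma_B\Psi_2,
\end{align*}
substitute into $[\squareTME_4-16\Psi_2]\Psi_{0B}$, and cancel the $\Psi_{1B}$ block by \eqref{advcomm} with $a=4$. After absorbing $\tho\Psi_2=3\rho\Psi_2$, $\edt\Psi_2=3\tau\Psi_2$ the residue is $6\Psi_2\big[(\tho-\rho-\bar{\rho})\sigma_B-(\edt-\tau-\bar{\tau}')\kappa_B-\Psi_{0B}\big]$, which vanishes by the linearized Ricci equation \eqref{eq:ghpeinsteinc}; this yields the sourceless Teukolsky equation. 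The spin $1$ identity \eqref{eq:psi1b-final} runs the same way, but now the primed derivatives of $\Psi_{1B}$ must be supplied by the $\Psi_2$-coupling identities \eqref{eq:ghpbianchib} and its star; the $\Psi_{2B}$ block cancels by \eqref{advcomm} with $a=3$, while the surviving first-order terms no longer vanish and, after re-expression through the linearized Ricci and Bianchi equations, assemble into the stated source.

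The genuinely new feature, which I expect to be the main obstacle, is that the $\Psi_2$-coupling Bianchi identities one must differentiate contain $(\tho-3\rho)\Psi_2$ and the like with the nonvanishing background scalar $\Psi_2$, so their linearization produces perturbations of the GHP operators themselves, $(\delta\tho)\Psi_2$, $(\delta\edt')\Psi_2$, alongside the perturbed spin coefficients; here $(\delta\tho)\Psi_2$ denotes the operator-perturbation piece of $\delta(\tho\Psi_2)$, i.e. the contribution of the linearized tetrad and connection form $\omega_a$. Tracking these against the background curvature is the delicate step. For the spin $0$ scalar \eqref{eq:psi2b-final} I would use the prime and prime-star of \eqref{eq:ghpbianchib} for $(\tho'-3\rho')\Psi_{2B}$ and $(\edt'-3\tau')\Psi_{2B}$; the $\Psi_{3B}$ block cancels by \eqref{advcomm} with $a=2$, and precisely these operator-perturbation residuals survive. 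The final step is to recognize the survivor as $-3\,\square_B\Psi_2^{1/3}$, the perturbation of the wave operator acting on the background scalar — an identification made transparent by the background relations $(\tho'-\rho')\Psi_2^{1/3}=(\edt'-\tau')\Psi_2^{1/3}=0$ from \eqref{eq:ghpDbianchi} (which also yield $[\square+2\Psi_2]\Psi_2^{1/3}=0$ via \eqref{box4}).
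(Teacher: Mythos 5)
Your proposal follows essentially the same route as the paper: linearize the appropriate Bianchi identities, combine them with first‑order operators so that the neighbouring‑scalar block cancels via \eqref{advcomm}, obtain \eqref{psi4b} and \eqref{eq:psi3b-final} by priming through Lemma~\ref{lem:tmeprime}, and identify the spin‑$1$ and spin‑$0$ sources with the operator‑perturbation residues (the paper's \eqref{eq:psi1bgauge} and $\square_B\Psi_2^{1/3}$). The only difference is bookkeeping — the paper rescales by $\Psi_2^{-1/3}$ (resp.\ $\Psi_2^{-2/3}$) first and applies \eqref{advcomm} with $a=2$ where you work unrescaled with $a=3$, which you correctly note is equivalent — and you leave the explicit assembly of the spin‑$1$ source term (via the linearized commutator \eqref{eq:ghpcommb} and Ricci identities \eqref{eq:ghpeinsteina}, \eqref{eq:ghpeinsteina}$'$) as a computation to be carried out.
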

\begin{proof}
First we consider the equations for linearized the Weyl scalars $\Psi_{0B},
\Psi_{4B}$
with extreme spin
weights $s=2,-2$.
The linearized Bianchi identities \eqref{eq:ghpbianchia}$^*$ and
\eqref{eq:ghpbianchia}
read
\begin{align}
({\tho}'-\rho')\Psi_{0B} &= (\edt-4\tau)\Psi_{1B} +3\sigma_B \Psi_2 \label{psi0_1} ,\\
({\edt}'-\tau')\Psi_{0B} &= (\tho -4\rho)\Psi_{1B} +3\kappa_B \Psi_2 . \label{psi0_2}
\end{align}
Combining these identities as
$(\tho -4\rho -\bar{\rho})\eqref{psi0_1} - (\edt -4\tau
-\bar{\tau}')\eqref{psi0_2}$ gives
\begin{equation}\begin{aligned}
&\big[(\tho -4\rho -\bar{\rho})({\tho}'-\rho')- (\edt -4\tau -\bar{\tau}')({\edt}'-\tau')\big]\Psi_{0B} = \\
& \hspace{3cm} \big[(\tho -4\rho -\bar{\rho})(\edt-4\tau) - (\edt -4\tau -\bar{\tau}')(\tho -4\rho)\big]\Psi_{1B} \\
& \hspace{4cm} +3\big[(\tho -4\rho -\bar{\rho})\sigma_B - (\edt -4\tau -\bar{\tau}')\kappa_B \big]\Psi_2 .
\end{aligned}\end{equation}
The term involving $\Psi_{1B}$ on the RHS vanishes due to \eqref{advcomm}
with $a=4$.
The perturbed Ricci identity \eqref{eq:ghpeinsteinc} reads $(\tho-\rho
-\bar{\rho})\sigma_B -(\edt-\tau -\bar{\tau}')\kappa_B = \Psi_{0B}$ and from
\eqref{eq:ghpDbianchi} it follows, that also the $\Psi_2$ term on the RHS
reduces to $3\Psi_{0B}\Psi_2$. Recalling the form of $\squareTME_p$
gives equation \eqref{psi0b} for $\Psi_{0B}$. Equation \eqref{psi4b} for
$\Psi_{4B}$ follows from this after applying a prime and using
\eqref{eq:tmeprime}.

As we shall see, the corresponding wave equations governing the linearized
Weyl scalars $\Psi_{1B}, \Psi_{3B}$ with spin weights $s=1, -1$
do not decouple in the sense
that other perturbed quantities than the given linearized Weyl scalar
are involved. The linearized Bianchi identities \eqref{eq:ghpbianchib}$^*$ and \eqref{eq:ghpbianchib}$'$ read
\begin{equation}\begin{aligned}
({\tho}'-2\rho')\Psi_{1B} &= \bigg\{(\edt-3\tau)\Psi_2 \bigg\}_B  ,\\
-({\edt}'-2\tau')\Psi_{1B} &= -\bigg\{(\tho-3\rho)\Psi_2  \bigg\}_B ,
\end{aligned}\end{equation}
Multiplying both equations by $\Psi_2^{-1/3}$ and using the Leibniz rule
gives
\begin{align}
({\tho}'-\rho')(\Psi_2^{-1/3}\Psi_{1B}) &=
\frac{3}{2}
\bigg\{(\edt-2\tau)\Psi_2^{2/3} \bigg\}_B \label{psi1_1} ,\\
-({\edt}'-\tau')(\Psi_2^{-1/3}\Psi_{1B}) &=
-\frac{3}{2}
\bigg\{(\tho-2\rho)\Psi_2^{2/3}	 \bigg\}_B \label{psi1_2} .
\end{align}
Here we have made use of the fact that $\Psi_2^{-1/3}$ can be moved inside
the $\{\}_B$ brackets in view of
the background Bianchi identities \eqref{eq:ghpDbianchi}.

Combining the above identities as
$(\tho-2\rho-\bar{\rho})\eqref{psi1_1} +
(\edt-2\tau-\bar{\tau}')\eqref{psi1_2}$, we get
\begin{multline*}
\bigg[(\tho-2\rho-\bar{\rho})({\tho}'-\rho') - (\edt-2\tau-\bar{\tau}')({\edt}'-\tau')\bigg](\Psi_2^{-1/3}\Psi_{1B}) = \\
\frac{3}{2}
\bigg\{ \big[(\tho-2\rho-\bar{\rho})(\edt-2\tau) -
  (\edt-2\tau-\bar{\tau}')(\tho-2\rho) \big]\Psi_2^{2/3} \bigg\}_B .
\end{multline*}
which gives
\begin{equation} \label{eq:psi1b-intermediate}
\big[\squareTME_2 -4\Psi_2 \big] \big( \Psi_2^{-1/3} \Psi_{1B} \big) =
3
\bigg\{ \big[(\tho-2\rho-\bar{\rho})(\edt-2\tau) - (\edt-2\tau-\bar{\tau}')(\tho-2\rho) \big]\Psi_2^{2/3} \bigg\}_B ,
\end{equation}
using \eqref{eq:squaretme-restrict}.
Expanding the right hand side of equation
\eqref{eq:psi1b-intermediate},
leaving off a factor of $3$, we have
\begin{multline*}
\big[(\tho-2\rho-\bar{\rho})(\edt-2\tau) - (\edt-2\tau-\bar{\tau}')(\tho-2\rho) \big]_B\Psi_2^{2/3} + \\
+ \big[(\tho-2\rho-\bar{\rho})(\edt-2\tau) -
  (\edt-2\tau-\bar{\tau}')(\tho-2\rho) \big] \big[\Psi_{2}^{2/3} \big]_B.
\end{multline*}
The second term vanishes due to the commutation relation \eqref{advcomm} with $a=2$, but for the first term, we must use identites valid off the
background.
Since $\Psi_2$ is a $\{0,0\}$ quantity, we find
\begin{equation}\label{eq:psi1bgauge}\begin{aligned}
&[(\tho-2\rho-\bar{\rho})(\edt-2\tau) - (\edt-2\tau-\bar{\tau}')(\tho-2\rho)]_B\Psi_{2A}^{2/3} = \\
&= [-\bar{\tau}'\tho -\kappa{\tho}' +\bar{\rho}\edt +\sigma{\edt}' -2\tau\tho -2(\tho\tau) +\\
&\hspace{2cm} +2\rho\edt +2(\edt\rho) - (2\rho+\bar{\rho}) (\edt-2\tau) +(2\tau+\bar{\tau}') (\tho-2\rho)]_B\Psi_{2A}^{2/3} \\
&= [-2\kappa\rho' + 2\sigma \tau' -2(\tho\tau) +2(\edt\rho)+2\tau\bar{\rho} -2\bar{\tau}'\rho]_B\Psi_{2A}^{2/3} \\
&= -2\Psi_{2A}^{2/3}[ ({\tho}'+2\rho'-\bar{\rho}')\kappa_B -({\edt}'+2\tau'-\bar{\tau})\sigma_B + 2\Psi_{1B}] ,
\end{aligned}\end{equation}
where we used the commutation relation \eqref{eq:ghpcommb} in the first step,
background Bianchi identities in the second step and the Ricci identities
\eqref{eq:ghpeinsteina}$'$ and \eqref{eq:ghpeinsteina} in the last step.
Applying a prime and
making use of \eqref{eq:tmeprime} gives \eqref{eq:psi3b-final}.
Finally we consider the spin weight 0 linearized Weyl scalar $\Psi_{2B}$.
Recall that
$\Psi_{2}$ is non-vanishing in a type D spacetime.
For this reason,
it is convenient in the calculations to leave some expressions as
$\{\}_B$ brackets. As in the previous cases,
we start with linearized Bianchi identities. From
\eqref{eq:ghpbianchib}$'$ and \eqref{eq:ghpbianchib}$^*$$'$, we get
\begin{equation}\begin{aligned}
\bigg\{({\tho}'-3\rho')\Psi_2 \bigg\}_B	 &= (\edt-2\tau)\Psi_{3B} ,\\
-\bigg\{({\edt}'-3\tau')\Psi_2 \bigg\}_B  &= -(\tho-2\rho)\Psi_{3B}.
\end{aligned}\end{equation}
Multiplying both equations by $\Psi_2^{-2/3}$ and using the Leibniz rule
gives
rescaled equations
\begin{align}
\bigg\{3({\tho}'-\rho')\Psi_2^{1/3} \bigg\}_B &= \Psi_2^{-2/3}(\edt-2\tau)\Psi_{3B} \label{psi2_1} ,\\
-\bigg\{3({\edt}'-\tau')\Psi_2^{1/3} \bigg\}_B &= -\Psi_2^{-2/3}(\tho-2\rho)\Psi_{3B} \label{psi2_2} .
\end{align}
Here we used the fact that $\Psi_2^{-2/3}$ can be moved inside
the $\{\}_B$
brackets because of the background Bianchi identities \eqref{eq:ghpDbianchi}.

To find a wave equation for $\Psi_{2B}$, we consider the combination
$(\tho-\bar{\rho})\eqref{psi2_1} + (\edt-\bar{\tau}')\eqref{psi2_2}$, which
gives
\begin{align}
&(\tho-\bar{\rho})\bigg\{3({\tho}'-\rho')\Psi_2^{1/3} \bigg\}_B -(\edt-\bar{\tau}')\bigg\{3({\edt}'-\tau')\Psi_2^{1/3} \bigg\}_B = \nonumber\\
& \hspace{2cm} (\tho-\bar{\rho})\left(\Psi_2^{-2/3}(\edt-2\tau)\Psi_{3B}\right)- (\edt-\bar{\tau}')\left(\Psi_2^{-2/3}(\tho-2\rho)\Psi_{3B}\right) .
\end{align}
Using the identities $(\tho -\bar{\rho})(\Psi_2^{-2/3}\phi) = \Psi_2^{-2/3}
(\tho-2\rho-\bar{\rho})\phi$ and $(\edt -\bar{\tau}')(\Psi_2^{-2/3}\phi) =
\Psi_2^{-2/3}$ $(\edt -2\tau -\bar{\tau}')\phi$, which follow
from \eqref{eq:ghpDbianchi},
the RHS vanishes due to \eqref{advcomm} with $a=2$. The operators on
the LHS can be put into the $\{\}_B$ brackets because of the background
Bianchi identities  \eqref{eq:ghpDbianchi}. This gives the identity
\begin{equation}\label{eq:psi2b-intermediate}
\bigg\{\! \big[ \squareTME_0 +2\Psi_2 \big] \Psi_2^{1/3} \! \bigg\} _B \hspace{-2mm} = 0 .
\end{equation}
Expanding the $\{\}_B$ bracket in equation \eqref{eq:psi2b-intermediate}
gives
\begin{subequations} \begin{align}
0 &= 3\bigg\{ \big[ \squareTME_0 +2\Psi_2 \big] \Psi_2^{1/3} \bigg\} _B \\
&= \bigg[\square +2\Psi_2 \bigg] \left(\Psi_2^{-2/3}\Psi_{2B} \right) +3\bigg[\square_B +2\Psi_{2B}\bigg] \Psi_2^{1/3} \label{psi2bcross}\\
&= \bigg[\square +8\Psi_2 \bigg] \left(\Psi_2^{-2/3}\Psi_{2B} \right) +3\square_B  \Psi_2^{1/3} .\label{psi2bexpl}
\end{align}\end{subequations}
and hence \eqref{eq:psi2b-final}. This completes the proof.
\end{proof}

For future reference, we state the following equations which
  were used in the  proof of theorem \ref{thm:eq-lin-grav}
\begin{cor} \label{cor:pert-weyl}
\begin{subequations}\label{eq:cor-eq}
\begin{align}
\big[\squareTME_2 -4\Psi_2 \big] (\Psi_2^{-1/3} \Psi_{1B})  &=
3
\bigg\{ \big[(\tho-2\rho-\bar{\rho})(\edt-2\tau) -
  (\edt-2\tau-\bar{\tau}')(\tho-2\rho) \big]\Psi_2^{2/3} \bigg\}_B ,
\label{eq:psi1b-intermediate-cor} \\
\bigg\{\! \big[ \squareTME_0 +2\Psi_2 \big] \Psi_2^{1/3} \! \bigg\} _B	&= 0 ,
\\
\big[\squareTME_2 -4\Psi_2 \big] (\Psi_2^{-1} \Psi_{3B}) &=
3
\Psi_2^{-2/3} \bigg\{ \big[(\tho-2\rho-\bar{\rho})(\edt-2\tau) -
  (\edt-2\tau-\bar{\tau}')(\tho-2\rho) \big]\Psi_2^{2/3} \bigg\}_B' ,
\end{align}
\end{subequations}
\end{cor}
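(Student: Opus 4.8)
The plan is to recognise that the corollary is largely a bookkeeping statement: the first two identities are \emph{verbatim} intermediate equations already established inside the proof of Theorem~\ref{thm:eq-lin-grav}, so no new work is required for them, and only the third identity (for $\Psi_{3B}$) needs a short argument, which I would obtain from the first by applying the prime operation together with Lemma~\ref{lem:tmeprime}.

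First I would point out that \eqref{eq:psi1b-intermediate-cor} is exactly equation \eqref{eq:psi1b-intermediate}: in the proof of the theorem it was produced by multiplying the linearized Bianchi identities \eqref{eq:ghpbianchib}$^*$ and \eqref{eq:ghpbianchib}$'$ by $\Psi_2^{-1/3}$, combining them with $(\tho-2\rho-\bar{\rho})$ and $(\edt-2\tau-\bar{\tau}')$, and reading off the left-hand side through \eqref{eq:squaretme-restrict}. Similarly, the middle identity is precisely \eqref{eq:psi2b-intermediate}, which was derived from \eqref{eq:ghpbianchib}$'$ and \eqref{eq:ghpbianchib}$^*$$'$ after rescaling by $\Psi_2^{-2/3}$ and using \eqref{advcomm} with $a=2$ to annihilate the right-hand side. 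Both of these therefore require only a pointer back to those displays.

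For the third identity I would prime \eqref{eq:psi1b-intermediate-cor} and invoke the prime rule \eqref{eq:tmeprime}. Since $\Psi_2^{-1/3}\Psi_{1B}$ has type $\{2,0\}$, here $p=2$ and $p'=-2$; using $\Psi_2'=\Psi_2$ and $\Psi_1'=\Psi_3$ one has $(\Psi_2^{-1/3}\Psi_{1B})'=\Psi_2^{-1/3}\Psi_{3B}$, so Lemma~\ref{lem:tmeprime} gives $(\squareTME_2(\Psi_2^{-1/3}\Psi_{1B}))'=\Psi_2^{2/3}\,\squareTME_{-2}(\Psi_2^{-2/3}\Psi_2^{-1/3}\Psi_{3B})=\Psi_2^{2/3}\,\squareTME_{-2}(\Psi_2^{-1}\Psi_{3B})$. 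The zeroth-order term primes to $4\Psi_2(\Psi_2^{-1/3}\Psi_{3B})=\Psi_2^{2/3}\cdot 4\Psi_2(\Psi_2^{-1}\Psi_{3B})$, so the whole primed left-hand side factors as $\Psi_2^{2/3}[\squareTME_{-2}-4\Psi_2](\Psi_2^{-1}\Psi_{3B})$; note this is $\squareTME_{-2}$ acting on the type $\{-2,0\}$ quantity, consistent with \eqref{eq:psi3b-final}. The right-hand side primes directly into the bracket $\{\,\cdots\}_B'$, and dividing through by the common factor $\Psi_2^{2/3}$ yields the stated $\Psi_2^{-2/3}$ prefactor.

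The only real care-point, and thus the main (minor) obstacle, is the bookkeeping of the rescaling factor $\Psi_2^{p'/3}$ in \eqref{eq:tmeprime} together with the correct pairing of the primed zeroth-order term $4\Psi_2$, so that $\Psi_2^{2/3}$ can be cleanly extracted from both sides; once this is tracked, the corollary follows with no further computation.
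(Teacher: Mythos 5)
Your proposal is correct and follows the paper's own route exactly: the first two identities are read off verbatim from the intermediate displays \eqref{eq:psi1b-intermediate} and \eqref{eq:psi2b-intermediate} in the proof of Theorem \ref{thm:eq-lin-grav}, and the third is obtained, as in the paper, by priming and invoking Lemma \ref{lem:tmeprime}. Your bookkeeping correctly yields $\squareTME_{-2}$ acting on the type $\{-2,0\}$ quantity $\Psi_2^{-1}\Psi_{3B}$, consistent with \eqref{eq:psi3b-final}; the subscript $2$ printed in the corollary's third equation appears to be a typo.
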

The rescaled Bianchi identity \eqref{psi1_2} has the
same form as the perturbed Maxwell equation \eqref{eq:ghpmaxwell} on a
charged type D background ($\phi_{1A}\neq 0$)
\begin{align}
\bigg\{(\tho-2\rho)\phi_1 \bigg\}_B &= ({\edt}'-\tau')\phi_{0B} .
\end{align}
Therefore the decoupled electromagnetic perturbation equations follow
immediately, and we have the following result.
\begin{cor}
In a charged type D background the Maxwell components $\phi_{iB}, i=0,1,2$
fulfill the equations
\begin{align}
\big[\squareTME_2 -4\Psi_2 \big] \phi_{0B} &= 2\bigg\{
\big[(\tho-2\rho-\bar{\rho})(\edt-2\tau) -
  (\edt-2\tau-\bar{\tau}')(\tho-2\rho) \big]\phi_1 \bigg\}_B  \label{phi0b} \\
&\bigg\{\big[(\tho-\rho-\bar{\rho})({\tho}'-2\rho')-
  (\edt-\tau-\bar{\tau}')({\edt}'-2\tau') \big]\phi_1\bigg\}_B
=0 \label{phi1b} \\
\big[\squareTME'_2-4\Psi_2\big] \phi_{2B} &= 2\bigg\{
\big[(\tho-2\rho-\bar{\rho})(\edt-2\tau) -
  (\edt-2\tau-\bar{\tau}')(\tho-2\rho) \big]'\phi_1 \bigg\}_B \label{phi2b}
\end{align}
\end{cor}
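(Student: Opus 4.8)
The plan is to read off the three equations from the linearized Maxwell system by exploiting the structural identity, already indicated in the text, between that system and the rescaled linearized Bianchi identities of Theorem~\ref{thm:eq-lin-grav}. On a charged type D background only $\phi_{1A}$ is nonzero, with $\phi_{0A}=\phi_{2A}=0$ and $\kappa=\sigma=0$, so $\phi_1$ plays exactly the role of the background quantity $\Psi_2^{2/3}$ in the Bianchi system. First I would linearize \eqref{eq:ghpmaxwell} together with its prime, star and star-prime versions. Because every source term has the form $\kappa\phi_2$, $\sigma\phi_2$ or a prime thereof, and because $\kappa_A=\sigma_A=0$ while $\phi_{0A}=\phi_{2A}=0$, each such term vanishes to first order, leaving the clean system
\begin{align*}
({\edt}'-\tau')\phi_{0B} &= \{(\tho-2\rho)\phi_1\}_B, &
({\tho}'-\rho')\phi_{0B} &= \{(\edt-2\tau)\phi_1\}_B, \\
(\edt-\tau)\phi_{2B} &= \{({\tho}'-2\rho')\phi_1\}_B, &
(\tho-\rho)\phi_{2B} &= \{({\edt}'-2\tau')\phi_1\}_B.
\end{align*}

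The first two of these have precisely the form of \eqref{psi1_1}, \eqref{psi1_2} under the replacement $\Psi_2^{-1/3}\Psi_{1B}\mapsto\phi_{0B}$ and $\frac{3}{2}\Psi_2^{2/3}\mapsto\phi_1$. Hence I would obtain \eqref{phi0b} by the identical combination used there, applying $(\tho-2\rho-\bar\rho)$ to the second equation and $(\edt-2\tau-\bar\tau')$ to the first and subtracting: by \eqref{eq:squaretme-restrict} the left side assembles into $\half(\squareTME_2-4\Psi_2)\phi_{0B}$, and after multiplying through by $2$ the right side is the claimed source, the coefficient $2$ (rather than the $3$ of Theorem~\ref{thm:eq-lin-grav}) reflecting that the Maxwell sources carry coefficient $1$ in place of $\frac{3}{2}$. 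Equation \eqref{phi2b} then follows simply by applying the prime operation to \eqref{phi0b}: priming sends $\phi_{0B}\mapsto-\phi_{2B}$, $\phi_1\mapsto-\phi_1$ and $\squareTME_2\mapsto\squareTME'_2$, and the two sign changes cancel.

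For \eqref{phi1b} I would instead eliminate $\phi_{2B}$ from the last two equations, exactly as $\Psi_{3B}$ was eliminated in the derivation of \eqref{eq:psi2b-intermediate}. Applying $(\tho-\rho-\bar\rho)$ to the third equation and $(\edt-\tau-\bar\tau')$ to the fourth and subtracting, the terms in $\phi_{2B}$ combine into $[(\tho-\rho-\bar\rho)(\edt-\tau)-(\edt-\tau-\bar\tau')(\tho-\rho)]\phi_{2B}$, which vanishes by the commutator \eqref{advcomm} with $a=1$ (this is a $\{p,0\}$ identity, here applied to the weighted perturbation $\phi_{2B}$ of type $\{-2,0\}$). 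What remains is exactly the constraint \eqref{phi1b}.

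The only genuinely new computation is the first step: producing the star and star-prime forms of the Maxwell equation, which requires the GHP star-action on the spin coefficients (e.g. $\rho^*=\tau$, $\kappa^*=\sigma$), and checking that the background source terms vanish. I would also want to confirm that \eqref{advcomm} and the Ricci identity \eqref{eq:ghpeinsteina} continue to hold in their stated form on the charged background; this is the case because the Ricci (Maxwell stress) contributions entering them are proportional to $\phi_{0A}$ or $\phi_{2A}$ and hence vanish. This bookkeeping is where I expect the only friction to lie; once the four equations are established, \eqref{phi0b}, \eqref{phi1b} and \eqref{phi2b} follow by the same formal manipulations as in Theorem~\ref{thm:eq-lin-grav}, the commutator \eqref{advcomm}, and a single use of the prime, with no further calculation.
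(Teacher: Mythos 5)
Your proposal is correct and follows exactly the route the paper intends: the paper's entire proof consists of the remark preceding the corollary that the linearized Maxwell equation on a charged type D background has the same form as the rescaled linearized Bianchi identities \eqref{psi1_1}--\eqref{psi1_2} and \eqref{psi2_1}--\eqref{psi2_2}, so that the three equations ``follow immediately'' by the same combinations used in the proof of Theorem \ref{thm:eq-lin-grav}. You have simply filled in the details the paper leaves implicit (the vanishing of the background source terms since $\phi_{0A}=\phi_{2A}=\kappa_A=\sigma_A=0$, the use of \eqref{advcomm} with $a=1$ rather than $a=2$, the coefficient $2$ in place of $3$, and the single application of the prime), all of which check out.
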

For a charged type D spacetime, the background Bianchi identities
\eqref{eq:ghpbianchi} include a term involving $\phi_1\bar\phi_1$
and hence the simple rescaling used above for vacuum type D backgrounds does not apply. Instead one can use the background Maxwell equations \eqref{eq:ghpmaxwell}, namely
$$
\tho \phi_1 = 2 \rho \phi_1, \quad \edt \phi_1 = 2 \tau \phi_1 ,
$$
as in \cite{fackerell:1982}. But as the $\Psi_2$ rescaling is
singular for the limit of flat background, the $\phi_1$ rescaling does not work for the limit of uncharged background.

In the special case of a test Maxwell field on an uncharged background,
rescaling by $\Psi_2$ becomes
possible and
the equation for $\phi_{1B}$ reduces to the Fackerell-Ipser equation
\cite{fackerell:ipser}
$$
\big[\square +2\Psi_2\big](\Psi_2^{-1/3} \phi_{1B}) = 0 ,
$$
while the equations for $\phi_{0B}$ and $\phi_{2B}$ become the spin $s=\pm1$ TME$_s$.

\section{Gauge source functions} \label{sec:gauge}
In this section we consider the equations for the gauge dependent quantities
$\Psi_{1B}, \Psi_{2B}, \Psi_{3B}$ in more detail.
\subsection{Gauge source functions for the Einstein equations}
In \cite{friedrich:1985}, Friedrich derived a frame based, symmetric hyperbolic
system for the Einstein-Yang-Mills system. We specialize to the vacuum case
and set the conformal factor $\Omega = 1$. Then, the result of
\cite{friedrich:1985} gives a symmetric hyperbolic system for a set of
unknowns consisting of a null tetrad (or spin frame),
the spin coefficients and the Weyl
spinor.

Starting from a system of equations involving the
tetrad and the curvature components as variables, he identified the gauge
source functions for this system. These are, letting
$(x^\mu)$ be coordinates on $M$,
$$
F^\mu = \square x^\mu ,
$$
and
$$
F_{\fraka \frakb} =
\nabla^a  (\nabla_a e_\fraka^b)	 e_\frakb^c g_{bc} ,
$$
where $(e_\fraka)$ is a null tetrad, cf.
\cite[equations (2.6), (2.13)]{friedrich:1985}, see also \cite{friedrich:1996}.

For the frame based hyperbolic system considered by Friedrich one may freely
specify the gauge source functions as functions of the spacetime coordinate,
the tetrad, the connection
coefficinets, and the Weyl tensor components, i.e.
\begin{align}
F^\mu &= F^\mu(x^\alpha, e_\fraka^\alpha, \Gamma^\fraka_{\frakb \frakc},
W_{\fraka \frakb \frakc \frakd}) , \\
F_{\fraka \frakb} &= F_{\fraka \frakb} ( x^\alpha, e_\fraka^\alpha, \Gamma^\fraka_{\frakb \frakc},
W_{\fraka \frakb \frakc \frakd}) ,
\end{align}
without changing the principal part of the resulting symmetric hyperbolic
system, see the discussion in \cite{friedrich:1996}, in particular
\cite[p. 1462]{friedrich:1996}.
As we are in a geometric situation where it is natural 
to
adapt to a specific background geometry and to use a GHP weighted tetrad, it
is convenient to consider the following modified gauge source functions. They
differ from the expressions given by Friedrich by lower order terms, which do
not change the principal part of the resulting reduced system.

To define the coordinate gauge source function,
fix a background metric $\hat g_{ab}$ on $M$, with Levi-Civita derivative
$\hnabla_a$ and let
$V^a$ be the tension field for the identity map $(M, g_{ab}) \to (M, \hat
g_{ab})$ defined by
$$
V^a \xi_a = g^{cd} (\hnabla_c - \nabla_c )\xi_d ,
$$
holds for any 1-form $\xi_a$,
 see
\cite{pap:VM-local} for details. Then a gauge source function for the
coordinate degrees of freedom can be given by the equation
\begin{equation}\label{eq:coord-gauge-full}
F^\mu = V^\mu .
\end{equation}
Further,
let $\Theta_a$ be the weighted GHP covariant derivative and let $(e_\fraka)$
be a weighted GHP tetrad. A gauge source function for the tetrad degrees of
freedom can be given by the equation
\begin{equation}\label{eq:tetrad-gauge-full}
F_{\fraka \frakb} = \Theta^a  (\Theta_a e_\fraka^b)  e_\frakb^c g_{bc}	.
\end{equation}
These are the expressions which we shall consider below.

\subsubsection{Gauge source functions for the Linearized Einstein equations}
Consider linearized perturbations around vacuum spacetime
$(M,g_{ab})$. The work of Friedrich on hyperbolic reductions carries over
immediately to the linearized vacuum field equations. Thus we may consider
linearized frame based systems with unknowns consisting of the linearized
tetrad, the linearized spin coefficients and the linearized Weyl scalars.
The
reduced system is extracted by specifying linearized coordinate and frame
gauge source functions $F^\mu_B$, $F_{B \fraka \frakb}$ which may be
specified freely as
functions of the unknowns which are linear as functions of
$e_{B \fraka}^\alpha, \Gamma^\fraka_{B \frakb
  \frakc} , W_{B \fraka \frakb \frakc \frakd}$. Thus, with this restriction,
we may consider gauge
source functions
\begin{align*}
F_B^\mu &= F_B^\mu(x^\alpha, e_{B \fraka}^\alpha, \Gamma^\fraka_{B \frakb
  \frakc} , W_{B \fraka \frakb \frakc \frakd} ) , \\
F_{B \fraka \frakb}  &= F_{B \fraka \frakb}
(x^\alpha, e_{B \fraka}^\alpha, \Gamma^\fraka_{B \frakb
  \frakc} , W_{B \fraka \frakb \frakc \frakd} ) .
\end{align*}

\subsection{Linearized Weyl scalars and gauge}\label{sec:gauge-weyl}
In the following discussion it is in some steps convenient to
use a $\delta$ to denote first order
linearized fields. In particular,
$\delta g_{ab} = h_{ab}$ and denote the resulting perturbations in
geometric fields defined in terms of $g_{ab}$ by e.g. $\delta R_{ab}$.
We have
$$
\delta R_{ab} = - \half \square h_{ab} - R_a{}^c{}_b{}^d h_{cd} + \nabla_{(a}
v_{b)} ,
$$
where, letting $h = g^{ab} h_{ab}$, and working in a coordinate system
$(x^\mu)$,
$$
v^\alpha = \nabla_\beta h^{\beta\alpha} - \half \nabla^\alpha h =  g^{\beta\nu} \delta
\Gamma^\alpha_{\beta \nu} .
$$
The vector field $v^\alpha$ defined by this expression
is precisely the linearization of the tension field
$V^\alpha$
around
$g_{ab}$ (playing the role of the background metric $\hat g_{ab}$
above).
Thus $v^\alpha$ is
the appropriate coordinate gauge source function for
linearized perturbations $h_{\alpha\beta}$ of $g_{\alpha\beta}$
and the gauge condition which corresponds to
\eqref{eq:coord-gauge-full} is given by
$$
v^\alpha = F_B^\alpha .
$$
The standard harmonic gauge, also known as deDonder gauge,
with respect to the background metric is given by the condition $F^\alpha_B =
0$. In this gauge, the linearized Einstein equations in terms of the
linearized metric $h_{ab}$ take the form of a
wave equation
$$
\square h_{ab} + 2 R_a{}^c{}_b{}^d h_{cd} = 0 ,
$$
where $\square = \nabla^c \nabla_c$ is the covariant d'Alembertian.

Let $(M,g)$ be a vacuum type D spacetime. Working in a principal null
tetrad,
let $\psi = \Psi_2^{1/3}$.
Recall that $\Psi_2$ is of type $\{0,0\}$ and is thus a well-defined function
on spacetime. In notation used in this section, the condition that the right
hand side of \eqref{eq:psi2b-final} vanishes, i.e.,
$$
\square_B \Psi_2^{1/3} = 0 ,
$$
takes the form
$$
(\delta \square) \psi = 0 .
$$
We have
\begin{equation}\label{boxb2}
(\delta \square) \psi = - h^{ab} \nabla_a \nabla_b \psi - v^a \nabla_a \psi .
\end{equation}
We can view this equation as specifying part of the coordinate gauge degrees
of freedom. In the Schwarzschild case, working in a principal tetrad, $\psi$ is
real, and hence \eqref{boxb2} specifies one component of $v^a$. In the
general case, $\psi$ is complex, while $v^a$ is real. Taking the real and
imaginary parts of \eqref{boxb2} gives two real equations for $v^a$.

In order
to analyze this equation in the Kerr case,
it is convenient to calculate in a coordinate system
and tetrad which is non-singular on the horizon. A tetrad in the ingoing Kerr
coordinate system (also known as ingoing Eddington-Finkelstein coordinates)
was described by Teukolsky \cite[\S 5]{teukolsky:1973:I}. In this tetrad, the
components
$\nabla_{\fraka} \psi$ are non-vanishing on the horizon.

As mentioned above, it is compatible with the well-posedness of
the reduced field equations to allow the gauge source functions to depend on
the Weyl scalars. Thus we may also consider gauge conditions of the form
\begin{equation}\label{eq:mysticgauge}
-3\square_B \Psi_2^{1/3} - 6
\Psi_2^{1/3} \Psi_{2B} = 0 ,
\end{equation}
which leads to the wave equation
\begin{equation}\label{eq:fack-ip-psi}
(\square + 2\Psi_2) (\Psi_2^{-2/3} \Psi_{2B}) = 0 ,
\end{equation}
for $\Psi_{2B}$. Thus, in the gauge given by \eqref{eq:mysticgauge},
$\Psi_2^{-2/3} \Psi_{2B}$ satisfies the Fackerell-Ipser equation. This
substantiates the discussion in the work of Crossman and Fackerell
\cite{crossman:fackerell:1982,fackerell:1982}. In the Kerr case, calculation
shows that the gauge source function given by \eqref{eq:mysticgauge} will
have terms depending on $1/a$, and hence this gauge condition behaves in a
singular manner in the Schwarzschild limit. Equation \eqref{eq:fack-ip-psi}
is not known to be separable or admit a symmetry operator,
see however \cite[p. 617]{fackerell:1982}. This discussion shows that in the
rotating case, also
a generalized harmonic gauge condition leading to a homogenous wave equation
$$
\square (\Psi_2^{-2/3} \Psi_{2B} ) = 0,
$$
(which admits symmetry operators)
is compatible with a well-posed reduced system.

Next we consider the phantom gauge condition.
A calculation shows that
$$
\Theta^a \NPm^b \Theta_a \NPl_b
= 2 [ {\tho}' \kappa - {\edt}' \sigma - \bar{\rho}' \kappa + \bar{\tau}\sigma +
  \Psi_1 ] .
$$
Restricting to a type D background this expression vanishes, and the first
order linearization gives
\begin{equation}\label{eq:gaugesourcetetrad}
F_{B lm} := \half \bigg\{\Theta^a \NPm^b \Theta_a \NPl_b  \bigg\}_B
= [ ({\tho}'-\bar{\rho}')\kappa_B -({\edt}'-\bar{\tau})\sigma_B +
  \Psi_{1B} ] .
\end{equation}
Now we can write the equation for $\Psi_{1B}$ in the form
$$
\big[\squareTME_2 -4\Psi_2 \big] (\Psi_2^{-1/3}\Psi_{1B}) =
- 6 \Psi_{2A}^{2/3}
[ F_{B lm} + 2 \rho' \kappa_B
+ 2 \tau' \sigma_B + \Psi_{1B} ] .
$$
Thus, Chandrasekhar's phantom gauge condition can be written in terms of the
gauge source function as
$$
F_{B lm} = - 2\rho' \kappa_B - 2 \tau' \sigma_B - \Psi_{1B} .
$$
In view of the discussion above, this form of the gauge source function is
compatible with a hyperbolic system for the linearized Einstein equations.
The equation for $\Psi_{3B}$ can be handled along the same lines.

\section{Gauge invariant equations on Schwarzschild background}\label{sec:ginvschw}

Price \cite{price:1972:II} has shown that for linearized gravity on a
Schwarzschild background, $r^3\Im\Psi_{2B}$ describes odd parity perturbations. He
used the special coordinates of Newman and Penrose \cite[p.572]{newman:penrose:1962}
for the perturbed spacetime, which can be understood as a coordinate gauge. Price
expressed the perturbed spin coefficients, which occur in the $\Psi_{2B}$ equation,
in terms of perturbed metric coefficients. These coefficients are real in the odd
parity case and therefore cancel. Price then used the definition of $\Im \Psi_{2B}$
in terms of perturbed Riemann tensor components, to relate it to the perturbed metric
coefficients. He showed that these coincide up to  a time derivative, in Regge-Wheeler
(RW) gauge, with the RW variable $Q$ for odd parity perturbations.

Starting from \eqref{psi2bexpl},
\begin{align} \label{eq:fullpsi2b}
 \bigg[ \square + 8 \Psi_2 \bigg] \left( \Psi_2^{-2/3} \Psi_{2B} \right)
 = - 3 \square_B  \Psi_2^{1/3} ,
\end{align}
we rederive the known result, that the imaginary part satisfies the gauge invariant
RW equation~\cite{moncrief:1974}, which reduces to the result of Price in RW gauge.
We also show that the real part gives the gauge invariant Zerilli equation. Price
was not able to derive this equation, possibly due to his choice of special coordinates.

The conventions of Martel and Poisson~\cite{martel:poisson:2005} will be used
in this section. Indices $a,b,\dots$ for coordinates $t,r$ and $A,B,\dots$ for
coordinates $\theta,\varphi$. This notation differs from the convention used by
Regge-Wheeler and Zerilli in that $K_{\text{MP}} = K_{\text{RW}}-\tfrac{1}{2}l(l+1)G$.
Metric perturbations are denoted $g_{\mu \nu} = g_{A \mu \nu} + p_{\mu \nu}$, since
$h_{\mu \nu}$ is used for some spherical harmonic decomposed components. It should be
noted that we still use the signature $(+---)$ while Martel and Poisson use the signature
$(-+++)$. Using Schwarzschild coordinates $g_{a b} = \diag(f,-f^{-1},-r^2,-r^2 \sin^2
\theta)$, where $f = 1-\tfrac{2M}{r}$ and $\Psi_2 = - M/r^3$, the Kinnersley frame reads
$$
 l^a = \big( f^{-1} , 1 , 0 , 0 \big) , \quad
n^a = \frac{1}{2} \big( 1 , -f , 0 , 0 \big) , \quad
 m^a = \frac{1}{\sqrt{2}r} \big( 0 , 0 , 1 , \frac{i}{\sin\theta} \big) .
$$
With $ \psi = \Psi_2^{1/3} $, the RHS of \eqref{eq:fullpsi2b} reduces to
\begin{equation}\label{schwarzschildgauge}
\begin{aligned}
  \square_B \psi & \stackrel{\eqref{boxb2}}{=}
  - p^{\mu\nu} \nabla^2_{\mu\nu} \psi - ( \nabla^\mu p^\sigma{}_\mu
  - \frac{1}{2} \nabla^\sigma p_\mu{}^\mu ) \nabla_\sigma \psi \\
  & = \big( - p^{\mu r} \partial_\mu -( \partial_\mu p^{r \mu})
  - p^{r \rho} \Gamma_{\mu \rho}^\mu + g^{rr} ( \partial_r p_\mu{}^\mu) \big) \partial_r \psi .
\end{aligned}
\end{equation}
Further simplifications will occur in odd and even part, which are investigated in the next
sections.

To relate $ \Psi_{2B} $ to Regge-Wheeler and Zerilli variables we do a calculation analogous
to that of Price~\cite[Appendix D]{price:1972:II}. Starting from
\begin{equation}
\begin{aligned}
- 2 \Psi_{2B} =
& R_A{}_{\alpha \beta \gamma \delta} (l^\alpha n^\beta l^\gamma n^\delta)_B
- R_A{}_{\alpha \beta \gamma \delta} (l^\alpha n^\beta m^\gamma \bar{m}^\delta)_B \\
& + R_B{}_{\alpha \beta \gamma \delta} (l^\alpha n^\beta l^\gamma n^\delta)_A
- R_B{}_{\alpha \beta \gamma \delta} (l^\alpha n^\beta m^\gamma \bar{m}^\delta)_A ,
\end{aligned}
\end{equation}
we get
\begin{align} \label{psi2bschw}
- 2 \Psi_{2B} = R_B{}_{trtr} + \tfrac{\i}{r^2 \sin \theta} R_B{}_{rt\theta\phi}
+ \frac{4M}{r^3} p_{\alpha\beta} l^\alpha n^\beta.
\end{align}
The second term is purely imaginary and known from the calculations of Price.
The perturbed Riemann tensor is related to metric perturbations $p_{\mu\nu}$ via
\begin{align} \label{eq:pertriemann}
R_B{}_{\alpha \beta \gamma \delta} = \tfrac{1}{2}(p_{\beta\gamma;\alpha\delta}
+ p_{\alpha\delta;\beta\gamma} - p_{\alpha\gamma;\beta\delta} - p_{\beta\delta;\alpha\gamma}
+ R_A{}_{\alpha\sigma\gamma\delta}p_\beta^\sigma + R_A{}_{\sigma\beta\gamma\delta}p_\alpha^\sigma) .
\end{align}
With these relations one can check explicitly that $\Im \Psi_{2B}$ corresponds to odd parity
and $\Re \Psi_{2B}$ corresponds to even parity perturbations.

\subsection{Imaginary part and Regge Wheeler equation}
The odd parity metric perturbation can be expressed as \cite[eq. 5.1 - 3]{martel:poisson:2005}
$$
p_{ab}=0 , \quad
p_{aB} = \sum_{lm} h_a^{lm} X_B^{lm} , \quad
p_{AB}=\sum_{lm} h_2^{lm} X_{AB}^{lm} ,
$$
where $X_B^{lm}$ and $X_{AB}^{lm}$ are vector and tensor spherical harmonics. It follows
that $\operatorname{tr}p_{\mu \nu}=0$ and after a short calculation \eqref{schwarzschildgauge}
reduces to
\begin{align*}
\square_B \psi	= 0 .
\end{align*}
$\Im \Psi_{2B}$ corresponds to odd parity perturbations and $\Im$\eqref{eq:fullpsi2b} reduces
in this case to
\begin{align*}
\big[ \square + 8\Psi_2 \big] (\Psi_2^{-2/3} \Im \Psi_{2B}) = 0 .
\end{align*}
Introducing tortoise coordinates $r_*$ by $\partial_{r*} = f\partial_r$ cancels $2\Psi_2$
from the potential, a rescaling by $r$ cancels first order $\partial_r$ terms and we are
left with a gauge invariant Regge Wheeler equation
\begin{align*}
\left[ \partial_t^2 -\partial_{r_*}^2 + f\frac{l(l+1)}{r^2}
- f\frac{6M}{r^3} \right](r^3\Im \Psi_{2B})=0 .
\end{align*}
To relate this to the Regge Wheeler variable $Q$, we look at
\eqref{psi2bschw}. For odd parity perturbations it gives
$(R_B^{odd})_{trtr}=0=p^\text{odd}_{\alpha\beta} \NPl^\alpha \NPn^\beta $ and
with \eqref{eq:pertriemann},
$$
\tfrac{R_{Brt\theta\phi}^{odd}}{r^2\sin\theta} = \frac{l(l+1)}{2} \left[
  \left( \frac{h_t}{r^2}\right)_{,r}
  -\left(\frac{h_r}{r^2}\right)_{,t}\right],
$$
(for convenience we suppress spherical harmonics and the related indices). It follows
that just the imaginary part contributes to the perturbations. We now have
\begin{align*}
-r^3\Im\Psi_{2B,t} &= \frac{r^3}{4} l(l+1) \left[ \left( \frac{h_t}{r^2}\right)_{,r}
-\left(\frac{h_r}{r^2}\right)_{,t}\right]_{,t} \\
&= \frac{f(l+2)!}{4r(l-2)!}\left[h_r +\half h_{2,r} -\frac{1}{r} h_2 \right]
= \frac{(l+2)!}{4(l-2)!}Q^\text{odd} ,
\end{align*}
which is the gauge invariant variable of Moncrief~\cite{moncrief:1974} and in
RW gauge reduces to the result of Price \cite{price:1972:II}.

\subsection{Real part and Zerilli equation}
The even parity metric perturbations can be written \cite[eq. 4.1 - 3]{martel:poisson:2005}
\begin{equation} \label{eq:evenpar}
p_{ab}=\sum_{lm} h_{ab}^{lm} Y^{lm} ,\quad
p_{aB}=\sum_{lm} j_a^{lm} Y_B^{lm} , \quad
p_{AB}=r^2\sum_{lm} K^{lm} \Omega_{AB} Y^{lm} +G^{lm} Y_{AB}^{lm} ,
\end{equation}
where $Y^{lm},Y_B^{lm}$ and $Y_{AB}^{lm}$ denotes the even parity scalar, vector and tensor
spherical harmonics. $\Re \Psi_{2B}$ is not gauge invariant, but transforms as
\begin{align*}
 \Psi_{2B}  \to \Psi_{2B}-\frac{3M}{r^4} \xi^r .
\end{align*}
For some coordinate gauge transformation $x^\mu \to x^\mu +\xi^\mu$. Therefore we rewrite
the wave equation for the gauge invariant quantity, using Appendix \ref{a:ginv}
\begin{align} \label{gINVpsi}
\widetilde{\Psi}_{2B} := r^3 \Re \Psi_{2B} +\frac{3M}{2} \bigg(K +\frac{\lambda}{2} G \bigg) ,
\end{align}
where $\lambda=l(l+1)$. The $\Psi_{2B}$ equation \eqref{eq:fullpsi2b} for even parity perturbations reads
\begin{equation} \label{gINVzer}
\begin{aligned}
&\left[ \partial_t^2 -\partial_{r_*}^2 +f\frac{\lambda}{r^2} -f\frac{6M}{r^3} \right] \widetilde{\Psi}_{2B} =\\
&\hspace{6mm} 3Mfr\square_B r^{-1} +\frac{3M}{2} \left[ \partial_t^2 -\partial_{r_*}^2 +f\frac{\lambda}{r^2}
-f\frac{6M}{r^3} \right] \left(K +\frac{\lambda}{2}G \right)  .
\end{aligned}
\end{equation}
The perturbed wave operator term \eqref{schwarzschildgauge} does not vanish, but gives
\begin{align}
\square_B \frac{1}{r} = -\frac{1}{r^2}\left[ \frac{\lambda f}{r^2} j_r -\left(\frac{3Mf}{r^2}
+ \frac{f^2}{2}\partial_r \right)h_{rr} +\partial_t h_{tr} +\left(\frac{M}{fr^2}
-\half\partial_r\right)h_{tt} +f\partial_r K\right].
\end{align}
Expanding \eqref{psi2bschw} for even parity perturbations gives after some calculations
$R_{B rt\theta\phi}=0$ and
\begin{equation}
\begin{aligned}
\Re \Psi_{2B} =& M(f^{-1}h_{tt} -fh_{rr}) \\ &+\frac{r^3}{4}\bigg[\partial_r^2 h_{tt}
+ \partial_t^2 h_{rr} -\frac{m}{r^2}\partial_r(f^{-1}h_{tt} -fh_{rr})-\frac{2m}{r^2f}\partial_t h_{tr}
- 2\partial_{tr}^2 h_{tr}\bigg] .
\end{aligned}
\end{equation}
The whole equation is now given in terms of metric perturbations. To compare
the result to others, we express all metric components in terms of gauge
invariants (denote with a tilde according to the conventions of Martel and Poisson).
Using the equations of Appendix \ref{a:ginv}, $\widetilde{\Psi}_{2B}$ takes the form
\begin{align}
\widetilde{\Psi}_{2B} = \frac{\Lambda r}{4} \left[ \widetilde{K}
+ \frac{2f}{\Lambda} \left(f\widetilde{h}_{rr} -r\partial_r \widetilde{K} \right) \right] ,
\end{align}
where $\lambda = l(l+1) ,\, \mu=(l-1)(l+2)=\lambda-2 , \, \Lambda = \mu +6M/r
$. The RHS of \eqref{gINVzer} can also be expressed in terms of gauge invariant quantities
\begin{align}
\text{RHS} = -f\frac{6M}{r^3} \widetilde{\Psi}_{2B} +\frac{3Mf\lambda}{2r^2} \widetilde{K} .
\end{align}
The first term cancels the RW potential on the LHS! Now we rescale
\eqref{gINVzer} by $\Lambda^{-1}$ (which depends on $r$),
\begin{align}
\left[ \partial_t^2 -\partial_{r_*}^2 +f\frac{\lambda}{r^2} +\Lambda(\partial_{r_*}^2 \Lambda^{-1})\right]
(\Lambda^{-1} \widetilde{\Psi}_{2B} ) = \frac{3Mf\lambda}{2r^2} \Lambda^{-1}\widetilde{K}
- 2(\partial_{r_*} \Lambda^{-1}) (\partial_{r_*} \widetilde{\Psi}_{2B}) .
\end{align}
A straight forward but tedious calculation shows that the RHS can be written as
\begin{align}
\frac{3Mf\lambda}{2r^2} \Lambda^{-1}\widetilde{K} -2(\partial_{r_*} \Lambda^{-1})
( \partial_{r_*} \widetilde{\Psi}_{2B}) = f\frac{6M}{r^3}\frac{\lambda}{\Lambda} \Lambda^{-1}\widetilde{\Psi}_{2B} ,
\end{align}
the new potential term on the LHS  simplifies to
\begin{align}
\Lambda(\partial_{r_*}^2 \Lambda^{-1}) = \frac{12Mf}{\Lambda^2r^4} \left( \frac{6M^2}{r} +3M\mu -\mu r \right) ,
\end{align}
and we finally end up with
\begin{align}
\left[ \partial_t^2 -\partial_{r_*}^2 +\frac{12Mf}{\Lambda^2r^4}\left( \frac{6M^2}{r} +3\mu M
+ \frac{\mu^2 r}{2} + \frac{\mu^2 r^2}{6M}\left(\frac{\mu}{2} +1\right) \right)\right]
( \Lambda^{-1} \widetilde{\Psi}_{2B} ) = 0 .
\end{align}
This is the gauge invariant Zerilli equation. The relation to Moncrief's gauge invariant variable
is simply $Q^\text{even} = 4 \Lambda^{-1} \widetilde{\Psi}_{2B}$.

\appendix

\section{Even partity perturbations in Schwarzschild coordinates} \label{a:ginv}
We used equations of the unpublished appendix of \cite{martel:poisson:2005}, which is available as
\href{http://arxiv.org/abs/gr-qc/0502028}{arXiv:gr-qc/0502028}. 
For convenience we repeat the required results of appendix C.

The even-parity metric perturbations \eqref{eq:evenpar} transform under even parity gauge
$\xi_a= (\xi_t^{lm}Y^{lm}, \xi_r^{lm}Y^{lm}, \xi^{lm}Y_A^{lm})$ as
\begin{align*}
\delta h_{tt} &= -2 \frac{\partial}{\partial t} \xi_t + \frac{2Mf}{r^2} \xi_r, &&&
\delta h_{tr} &= -\frac{\partial}{\partial r} \xi_t
- \frac{\partial}{\partial t} \xi_r + \frac{2M}{r^2 f} \xi_t, \\
\delta h_{rr} &= -2 \frac{\partial}{\partial r} \xi_r
- \frac{2M}{r^2f} \xi_r, &&&
\delta j_t &= -\frac{\partial}{\partial t} \xi - \xi_t, \\
\delta j_r &= -\frac{\partial}{\partial r} \xi - \xi_r
+ \frac{2}{r} \xi, &&&
\delta K &= -\frac{2f}{r} \xi_r + \frac{\lambda}{r^2} \xi, \\
\delta G &= -\frac{2}{r^2} \xi.
\end{align*}
Martel and Poisson extracted the following gauge invariant quantities
\begin{align*}
\tilde{h}_{tt} &= h_{tt} - 2 \frac{\partial}{\partial t} j_t
+ \frac{2Mf}{r^2} j_r + r^2 \frac{\partial^2}{\partial t^2} G
- Mf \frac{\partial}{\partial r} G, \\
\tilde{h}_{tr} &= h_{tr} - \frac{\partial}{\partial r} j_t
- \frac{\partial}{\partial t} j_r + \frac{2M}{r^2 f} j_t
+ r^2 \frac{\partial^2}{\partial t \partial r} G
+ \frac{r-3M}{f} \frac{\partial}{\partial t} G, \\
\tilde{h}_{rr} &= h_{rr} - 2 \frac{\partial}{\partial r} j_r
- \frac{2M}{r^2 f} j_r + r^2 \frac{\partial^2}{\partial r^2} G
+ \frac{2r-3M}{f} \frac{\partial}{\partial r} G, \\
\tilde{K} &= K - \frac{2f}{r} j_r
+ rf \frac{\partial}{\partial r} G
+ \frac{\lambda}{2} G,
\end{align*}
and with these, the vacuum field equations are
\begin{align*}
0 =& -\frac{\partial^2}{\partial r^2} \tilde{K}
- \frac{3r-5M}{r^2 f} \frac{\partial}{\partial r} \tilde{K}
+ \frac{f}{r} \frac{\partial}{\partial r} \tilde{h}_{rr}
+ \frac{(\lambda+2)r + 4M}{2r^3} \tilde{h}_{rr}
+ \frac{\mu}{2r^2 f} \tilde{K}, \\
0 =& \frac{\partial^2}{\partial t \partial r} \tilde{K}
+ \frac{r-3M}{r^2 f} \frac{\partial}{\partial t} \tilde{K}
- \frac{f}{r} \frac{\partial}{\partial t} \tilde{h}_{rr}
- \frac{\lambda}{2r^2} \tilde{h}_{tr}, \\
0 =& -\frac{\partial^2}{\partial t^2} \tilde{K}
+ \frac{(r-M)f}{r^2} \frac{\partial}{\partial r} \tilde{K}
+ \frac{2f}{r} \frac{\partial}{\partial t} \tilde{h}_{tr}
- \frac{f}{r} \frac{\partial}{\partial r} \tilde{h}_{tt}
+ \frac{\lambda r + 4M}{2r^3} \tilde{h}_{tt}
- \frac{f^2}{r^2} \tilde{h}_{rr}
- \frac{\mu f}{2r^2} \tilde{K}, \\
0 =& \frac{\partial}{\partial t} \tilde{h}_{rr}
- \frac{\partial}{\partial r} \tilde{h}_{tr}
+ \frac{1}{f} \frac{\partial}{\partial t} \tilde{K}
- \frac{2M}{r^2 f} \tilde{h}_{tr}, \\
0 =& -\frac{\partial}{\partial t} \tilde{h}_{tr}
+ \frac{\partial}{\partial r} \tilde{h}_{tt}
- f \frac{\partial}{\partial r} \tilde{K}
- \frac{r-M}{r^2 f} \tilde{h}_{tt}
+ \frac{(r-M)f}{r^2} \tilde{h}_{rr},
\end{align*}
\begin{align*}
0 =& -\frac{\partial^2}{\partial t^2} \tilde{h}_{rr}
+ 2 \frac{\partial^2}{\partial t \partial r} \tilde{h}_{tr}
- \frac{\partial^2}{\partial r^2} \tilde{h}_{tt}
- \frac{1}{f} \frac{\partial^2}{\partial t^2} \tilde{K}
+ f \frac{\partial^2}{\partial r^2} \tilde{K} \\
& + \frac{2(r-M)}{r^2 f} \frac{\partial}{\partial t} \tilde{h}_{tr}
- \frac{r-3M}{r^2 f} \frac{\partial}{\partial r} \tilde{h}_{tt}
 - \frac{(r-M)f}{r^2} \frac{\partial}{\partial r} \tilde{h}_{rr}
+ \frac{2(r-M)}{r^2} \frac{\partial}{\partial r} \tilde{K} \\
&+ \frac{\lambda r^2-2(2+\lambda)Mr+4M^2}{2r^4 f^2} \tilde{h}_{tt}
- \frac{\lambda r^2-2\mu Mr-4M^2}{2r^4} \tilde{h}_{rr}, \\
0=& \frac{1}{f} \tilde{h}_{tt} - f \tilde{h}_{rr}.
\end{align*}

\subsection*{Acknowledgements} We are grateful to Ji{\v{r}}{\'{\i}}
Bi{\v{c}}{\'a}k, Pieter Blue, Helmut Friedrich, Jacek Jezierski, Jean-Philippe
Nicolas and Bernd Schmidt for helpful
discussions. One of the authors (S.A.) gratefully acknowledges the support of
the \emph{Studienstiftung des deutschen Volkes} and the \emph{Centre for Quantum Engineering and Space-Time Research (QUEST)}.

\newcommand{\prd}{Phys. Rev. D}

\bibliographystyle{abbrv}
\bibliography{lin}

\end{document}